\documentclass[twoside]{article}
\usepackage{preprint}
\usepackage{amsmath}

\usepackage{amsfonts}
\usepackage{graphicx}
\usepackage{float}
\usepackage{hyperref}

\newtheorem{Theorem}{Theorem}
\newtheorem{Lemma}{Lemma}
\newtheorem{Corollary}{Corollary}
\newtheorem{Proposition}{Proposition}
\newtheorem{Definition}{Definition}
\newtheorem{conjecture}{Conjecture}
\newcommand{\nexists}{\not\exists}
\newcounter{algorithm}
\newenvironment{algorithm}[1][]{\begin{center}\begin{minipage}{0.9\linewidth}\small}{\end{minipage}\end{center}}
\newcommand{\algorithmcaption}[1]{\refstepcounter{algorithm}\noindent\textbf{Algorithm \thealgorithm.} #1\par\vspace{4pt}}
\newenvironment{algorithmic}[1][]{\begin{list}{}{\setlength{\leftmargin}{1.5em}\setlength{\itemsep}{3pt}}}{\end{list}}
\newcommand{\Require}{\item[]\textbf{Require:} }
\newcommand{\Ensure}{\item[]\textbf{Ensure:} }
\newcommand{\State}{\item }
\newcommand{\While}[1]{\item \textbf{while} #1 \textbf{do}\begin{list}{}{\setlength{\leftmargin}{1.5em}}}
\newcommand{\EndWhile}{\end{list}\item[]\textbf{end while}}

\begin{document}
\setlength{\textheight}{8.0truein}
\runninghead{Geometric Algebra: Quantum Gate Decomposition}{Y. Amraoui and Z. Toffano}
\normalsize\textlineskip
\thispagestyle{empty}
\setcounter{page}{1}
\vspace*{0.45truein}
\alphfootnote
\fpage{1}
\centerline{\bf GEOMETRIC ALGEBRA QUANTUM GATE DECOMPOSITION}
\vspace*{0.37truein}
\centerline{\footnotesize YOUSSEF AMRAOUI}
\vspace*{0.015truein}
\centerline{\footnotesize\it CentraleSupélec, Université Paris-Saclay, 91190 Gif-sur-Yvette, France}
\centerline{\footnotesize\it youssef.amraoui@student-cs.fr}
\centerline{\footnotesize ZENO TOFFANO}
\centerline{\footnotesize\it Laboratoire Signaux et Systèmes (L2S), UMR 8506, CentraleSupélec, Université Paris-Saclay, CNRS, 91190 Gif-sur-Yvette, France}
\centerline{\footnotesize\it zeno.toffano@centralesupelec.fr}
\vspace*{0.30truein}
\abstracts{
Quantum gates are traditionally described using matrix and tensor-product formalisms, representations that provide limited geometric intuition.
In this work, we develop a formulation of the Pauli and Clifford groups within the complex Geometric Algebra (GA) framework in order to obtain useful quantum gate decompositions.
We show that the Pauli group is naturally identified with the group of blades up to a global phase, providing an intuitive geometric interpretation of Pauli operators and their commutation relations in terms of oriented subspaces.
We further prove that Clifford operators are generated by products of \(\pi/4\)-Pauli rotors and introduce a greedy Pauli rotor decomposition algorithm whose empirical performance reveals remarkably compact decompositions of Clifford operators.
Finally, we show that Clifford+\(T\) universality also acquires a natural geometric interpretation through \(\pi/8\)-rotors within this framework.
This work highlights Geometric Algebra as a potential geometric tool for quantum computation applications.
}{}{}
\vspace*{10pt}
\keywords{geometric algebra; Clifford algebra; quantum computing; Pauli group; Clifford group; stabilizer group; quantum gates}
\vspace*{1pt}\textlineskip
\section{Introduction}

In quantum computing \cite{N&C2010}, quantum gates are the fundamental building blocks of quantum algorithms, mathematically represented by unitary operators that are applied on the states of a quantum processor. The practical implementation of an arbitrary quantum gate is generally challenging, and it is therefore natural to conceive of a quantum computer limited to a finite set of operations that can be physically performed on its qubits, hence the need to study the decomposition of an arbitrary quantum gate into other feasible gates. In order to achieve this decomposition for quantum circuits, the literature yields several different decomposition bases, such as the {Clifford+T} 
 basis \cite{N&C2010} or the Pauli Rotor decomposition \cite{Hegde2016}.

The different decomposition methods rely on a small number of algebraic structures, among which the Pauli and Clifford groups play a particularly central role. Pauli operators provide a discrete basis for describing quantum observables, while Clifford operations describe the class of unitary transformations that preserve this structure under conjugation. Together, these groups form the algebraic backbone of many quantum circuit synthesis techniques and play a fundamental role in the study of quantum computation.

Despite their importance, Pauli and Clifford operators are most often introduced through matrix representations and tensor products, which tend to hide their underlying geometric content. As the number of qubits increases, this representation becomes increasingly combinatorial, offering limited intuition for the structure of multi-qubit operators.

Geometric Algebra (GA), also known as Clifford Algebra, provides a natural mathematical framework for encoding both algebraic and geometric information within a single formalism.  In this framework, vectors, multivectors, and their products admit direct geometric interpretations, while unitary transformations arise naturally as rotors acting \mbox{by conjugation.}

Originally developed to unify vector calculus and geometry, GA has found applications in classical and quantum physics, robotics and computer {science} 
 \cite{Hestenes1966,Doran&Lasenby2003,Breuils&al2022}. 
 There exist various proposals that use GA in quantum computing \cite{Vlasov2001, Havel&Doran2004,Cafaro&al2010,Veyrac&Toffano2024}. Recently, a formalism built on complex Geometric algebras was proposed using the Witt Basis formalism \cite{Hrdina et al. (2022)}.

The purpose of the research presented here is to investigate how the Pauli and Clifford groups can be formulated intrinsically within the complex GA model proposed in \cite{Hrdina et al. (2022)} and to clarify the geometric structure that underlies their standard matrix-based gate descriptions used in quantum computing. Our goal is structural: we aim to show that Pauli operators correspond naturally to blades of a Geometric Algebra and that Clifford operations emerge as symmetry transformations preserving this discrete geometric set.

The paper is organized as follows. In Section \ref{ga}, we introduce the basic notions of Geometric Algebra required for this work. Section \ref{complexification} develops the representation of quantum states and linear operators within this framework. In Section \ref{pauli}, we characterize the Pauli group in Geometric Algebra and discuss its interpretation. Section \ref{clifford} is devoted to the geometric formulation of the Clifford group and the proof of its generation by Pauli rotors; it also discusses the non-Clifford $T$-gate useful for universal quantum computation. In Section \ref{DecAlgo} we present the decomposition algorithm and show the results. We conclude with a discussion of the implications and possible extensions of this approach.

\section{Geometric Algebra: A Brief Introduction} \label{ga}

GA comprises the algebra of quaternions introduced by W.H. Hamilton in 1843  \cite{HamiltonBook} for the description of geometric operations such as rotations and the Grassmann algebra based on the exterior product, which enables the construction of higher-dimensional objects. The synthesis was undertaken by W.K. Clifford around 1870 \cite{Clifford1873,Clifford1878} and generalized into 
a single mathematical structure that he originally called \textit{{Geometric} 
 Algebra} with the introduction of the \textit{geometric product}.

Afterward, for historical reasons, GA was not universally adopted, and the Gibbs vector analysis, which separated scalar and vector products, almost completely replaced the quaternion algebra and the geometric product by the end of the nineteenth century. 
However, in the second half of the twentieth century, some physicists, led by David Hestenes and his followers, used GA methods in classical mechanics, electromagnetism, quantum physics, and special and general relativity \cite{Hestenes1966,Doran&Lasenby2003}.

We begin with a brief yet sufficiently detailed introduction to Geometric Algebra.

This section introduces the basic definitions and tools required to understand the subsequent developments of this paper.
\subsection{Definitions}
Let $p$ be a non-negative integer, and consider the real vector space $(\mathbb{R}^p, \cdot)$ equipped with the usual inner product.

\begin{Definition}[Geometric Algebra]
Let $(\mathcal{A}, +, \cdot)$ be a real unital associative algebra. 
We say that $\mathcal{A}$ is a Geometric Algebra over $(\mathbb{R}^p, \cdot)$ if there exists an injective linear map
\[
f : \mathbb{R}^p \rightarrow \mathcal{A}
\]
such that
\begin{equation}
    \forall x \in \mathbb{R}^p, \qquad f(x)^2 = \|x\|^2 1_{\mathcal{A}}.
\end{equation}
and such that $\mathcal{A}$ is generated by $f(\mathbb{R}^p)$.\\

\vspace{-12pt}
The product in a Geometric Algebra is called the geometric product.
\end{Definition}

It is known that all Geometric Algebras associated with $(\mathbb{R}^p, \cdot)$ are isomorphic. 
Consequently, it is sufficient to work within a single representative algebra.
For this reason, we denote by $\mathcal{G}_p({\mathbb{R}})$ \emph{the} Geometric Algebra generated by $\mathbb{R}^p$, characterized by the \mbox{following {properties:}}
\begin{align*}
    &\mathbb{R} \subset \mathcal{G}_p(\mathbb{R})\quad(\text{so }1_\mathcal{G} =1),\\
    &\mathbb{R}^p \subset \mathcal{G}_p(\mathbb{R})\quad(\text{so }f=\text{Id}).\\
\end{align*}

\vspace{-20pt}
The \textit{geometric product} of two vectors $x, y \in \mathbb{R}^p$ can be decomposed as
\begin{equation}
    xy = x \cdot y + x \wedge y,
\end{equation}
where the first term is the \emph{inner product}. Being commutative, it coincides with the standard scalar product and yields a scalar quantity.

The second term is the \emph{outer product}, defined by
\begin{equation}
x \wedge y = \frac{xy - yx}{2}.
\end{equation}

{The} 
 outer product is anticommutative; it coincides with the Grassmann product and admits a clear geometric interpretation.
It represents an oriented surface element, called a \emph{bivector}. Bivectors are the two-dimensional analog of vectors: they encode not only orientation but also area.

Two vectors commute if (and only if) they are parallel, and they anticommute if (and only if) they are perpendicular to each {other.}

The outer product of $r$ vectors is, when nonzero, called a \textit{$r$-blade}, and elements of the Geometric Algebra are called \textit{multivectors}. A multivector is a linear combination of blades.
\subsection{Basis of Geometric Algebra}
Let $\mathcal{B}=(e_1,...,e_p)$ be the usual orthonormal basis of $\mathbb{R}^p$. It is possible to construct a basis of $\mathcal{G}_p(\mathbb{R})$, using the canonical blades 
\begin{equation}
    e_J = e_{j_1} \cdots e_{j_k} \quad \text{where} \quad J=\{j_1 < \cdots < j_k\} \subset \{1,\dots,p\} \\
\end{equation}
with $e_\emptyset = 1$. So a multivector $A \in \mathcal{G}_p(\mathbb{R})$ is uniquely expressed as 
\begin{equation}\label{expression_mv}
    A = \sum_{J \subset \{1,\dots,p\}} a_Je_J \quad; a_J \in \mathbb{R}
\end{equation}
{We} hence write 
\begin{equation}
    \mathcal{G}_p(\mathbb{R}) = \bigoplus_{r=0}^p \mathcal{G}_p^{(r)}(\mathbb{R}) = \mathcal{G}_p^{+}(\mathbb{R}) \oplus\mathcal{G}_p^{-}(\mathbb{R})
\end{equation}
where $\mathcal{G}_p^{(r)}$ is the vector space spanned by $r$-blades, elements of this subspace are called $r$-\textit{vectors}, and the \text{grade} of such an element is $r$, and $\mathcal{G}_p^{+}(\mathbb{R})$ ($\mathcal{G}_p^{-}(\mathbb{R})$) is the vector space spanned by blades of even (odd) grade. We have
\begin{equation}\label{odd_even_mv}
\mathcal{G}_p^{\varepsilon}(\mathbb{R})\mathcal{G}_p^{\nu}(\mathbb{R})=\mathcal{G}_p^{\varepsilon\nu}(\mathbb{R}) \quad \text{ for } \varepsilon,\nu \in \{+,-\}
\end{equation}
{And}
\begin{align*}
    \dim(\mathcal{G}_p(\mathbb{R})) = 2^p \quad \dim{(\mathcal{G}_p^{(r)}(\mathbb{R}))}= \binom{p}{r}
 \quad \dim(\mathcal{G}_p^{+}(\mathbb{R})) = \dim (\mathcal{G}_p^{-}(\mathbb{R})) =2^{p-1}
\end{align*}

\subsection{Transformations of Multivectors}
We define some operations on multivectors that will be used later.
\begin{itemize}
    \item \textbf{{Projection}}: If $A$ is a multivector, $(A)_r \in \mathcal{G}_p^{(r)}(\mathbb{R})$ is its projection onto the sub-space of $r$-vectors.
    \item \textbf{{Inversion}}: Defined as the automorphism of the Geometric Algebra such that for $A_r \in \mathcal{G}_p^{(r)}(\mathbb{R})$.
    $$A_r^*=(-1)^rA_r$$
    It is the same as replacing vectors with their opposites.
    \item \textbf{{Reversion}}: It is the antiautomorphism of the Geometric Algebra that verifies for a $r$-vector $A_r$
    $$A_r^\dagger = (-1)^\frac{r(r-1)}{2}A_r$$
    It is the same as reversing the order of vectors in the product of a $r$-vector.
\end{itemize}
\section{Complexification of the Geometric Algebra} \label{complexification}
The representation of quantum states and elementary quantum gates adopted in this section follows the complex Clifford algebra formalism introduced by Hrdina et~al.~\cite{Hrdina et al. (2022)}. In particular, we use the Witt basis construction and the associated representation of computational basis states.

Building upon this framework, we establish in Theorem \ref{isomorphism} an explicit correspondence between arbitrary linear operators acting on the quantum register and left multiplication by multivectors. This result provides a complete algebraic identification between matrix operators and elements of the Geometric Algebra, which will serve as the foundation for the developments of the subsequent sections.

\subsection{Complexification, the Witt Basis}
The complexification of the Geometric Algebra \cite{Hrdina et al. (2022),Veyrac&Toffano2024} 
 means allowing for complex coefficients. The same generators of the \textit{real} Geometric Algebra are henceforth generators of the \textit{complex} GA. An element of the complex Geometric Algebra can be written \mbox{naturally as} 
$$A=X+iY$$
where $X$ and $Y$ are real multivectors, and $i$ is the imaginary number. We thus write 
$$\mathcal{G}_p(\mathbb{C})=\mathcal{G}_p(\mathbb{R}) \oplus i \mathcal{G}_p(\mathbb{R})=\text{Span}_\mathbb{C}\{e_J | J \subset \{1,\cdots ,p\}\}$$

We extend the operations on real multivectors onto complex multivectors by linearity while taking into account that the grade of a complex number is always $0$, its inversion is itself, and its reversion is its usual complex conjugate. We can also define an inner product of multivectors, for $A, B \in \mathcal{G}_p(\mathbb{C})$
\begin{align}
    \langle A|B\rangle = \big( A^\dagger B \big)_0
\end{align}
{In fact,} this inner product is a Hermitian product, if $A$ and $B$ are written as in \eqref{expression_mv} (with complex coefficients)
$$\langle A|B \rangle = \sum_{J \subset \{1,\cdots,p\}} a_J^\dagger b_J$$

Let us assume this time that the dimension of the generating vector space is even, so $p=2n$. Instead of considering the generating vector basis $\big\{e_j, ie_j | j \in \{1,\cdots 2n\} \big\}$, we introduce the so-called \textit{Witt basis}, that is the basis $(f_j,f_j^\dagger)$ such that 
\begin{align}
    f_j = \frac{1}{2}\big(e_j - i e_{n+j}\big) ,\text{ }j=1,\cdots,n \\
    f_j^\dagger=\frac{1}{2}\big(e_j + i e_{n+j}\big),\text{ }j=1,\cdots,n
\end{align}

Note that the vectors of the Witt basis verify the Grassmann identities 
\begin{equation}\label{grassmann_id}
    f_jf_k+f_kf_j=0 ,\text{ }j,k=1,\cdots,n
\end{equation}
{As well} as the duality identities 
\begin{equation}\label{duality_id}
    f_jf_k^\dagger + f_k^\dagger f_j = \delta_{jk}, \text{ }j,k=1,\cdots,n
\end{equation}

From now on, we denote by $\mathcal{G}_n$ the $2n$-complex Geometric Algebra 
\begin{equation}\mathcal{G}_n = \mathcal{G}_{2n}({\mathbb{C}})\end{equation}
and by $\mathcal{F} = (f_1,..,f_n,f^\dagger_1,...,f_n^\dagger)$ the associated Witt basis.
\subsection{Dirac Formalism in Geometric Algebra}
We consider a system of $n$-qubits, the $n$-qubit states can be represented in the Geometric Algebra $\mathcal{G}_n$ by \cite{Hrdina et al. (2022)}
\begin{align}
    |j\rangle=|j_1\cdots j_n \rangle \leftrightarrow \varphi_j =  (f_1^\dagger)^{j_1} \cdots (f_n^\dagger)^{j_n}I \quad
\end{align}
where $I=f_1f_1^\dagger\cdots f_nf_n^\dagger$.

Furthermore, the unitary transformations on states are expressed by left multiplication by \textit{unitary multivectors} $\lambda \in \mathcal{G}_n$ verifying
\begin{align}
    \lambda\lambda^\dagger=1  
\end{align}

\subsection{Case of 1-Qubit System}
The simple system of one qubit is represented by 
\begin{align}
    \psi=aff^\dagger+bf^\dagger
\end{align}
where $a,b \in \mathbb{C}$ such that $|a|^2+|b|^2=1$.

{The} 
 known 1-qubit gates in quantum computing are 

\vspace{-15pt}
\begin{align}
    \text{Pauli-X} \quad\lambda_X=&f+f^\dagger=e_1 \\
    \text{Pauli-Y}\quad\lambda_Y=&-i(f-f^\dagger) =-e_2\\
    \text{Pauli-Z}\quad\lambda_Z=&ff^\dagger-f^\dagger f=ie_1e_2\\
    \text{Phase}\quad\lambda_S =& ff^\dagger + if^\dagger f= e^{i\frac{\pi}{4}}(\frac{1+e_1e_2}{\sqrt2})\\
    \text{Hadamard}\quad\lambda_H=&\frac{1}{\sqrt2}(f+ff^\dagger +f^\dagger -f^\dagger f) = e_1(\frac{1 +ie_2}{\sqrt2})
\end{align}

In the case of one qubit, the link between 
standard matrices and multivectors is \linebreak  as follows.
\begin{align}
    \begin{pmatrix}
        a & b \\
        c & d \\
    \end{pmatrix}
    \equiv aff^\dagger+bf+cf^\dagger+df^\dagger f
\end{align}

The previous constructions allow us to represent quantum states as multivectors and
elementary quantum gates as specific elements of the Geometric Algebra.
A natural question that follows is whether this correspondence extends beyond
elementary gates to arbitrary linear transformations acting on the quantum register.
In other words, we ask whether the entire matrix algebra of linear operators on the
Hilbert space can be faithfully represented within the Geometric Algebra formalism.

The following theorem extends the previous correspondence to arbitrary linear operators that act on the quantum register, going beyond elementary gates inherited from the framework of Hrdina et al. \cite{Hrdina et al. (2022)}. 

\begin{Theorem}\label{isomorphism}
    Every linear transformation in the space of the quantum register is uniquely represented by left multiplication by a multivector in $\mathcal{G}_n$. For a linear transformation $M =(m_{k,l}) \in \mathcal{M}_{2^n}(\mathbb{C})$, its associated multivector is given by 
    \begin{equation}\label{matrix_mv}
        \lambda_M = \sum_{0\leq l,k \leq 2^n-1} m_{k,l} \varphi_k \varphi_l^\dagger
    \end{equation}
\end{Theorem}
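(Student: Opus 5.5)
The plan is to show that the states $\varphi_j$, $0\le j\le 2^n-1$, behave like an orthonormal basis under the multivector product. The key identity to establish is
\[
\varphi_l^\dagger \varphi_j = \delta_{lj}\, I ,\qquad \text{and hence}\qquad \varphi_k\varphi_l^\dagger\varphi_j=\delta_{lj}\varphi_k ,
\]
so that $\varphi_k\varphi_l^\dagger$ plays the role of the matrix unit $|k\rangle\langle l|$. Existence of $\lambda_M$ then follows at once by linearity: $\lambda_M\varphi_j=\sum_k m_{k,j}\varphi_k$, which is exactly the action of $M$ on $|j\rangle$.

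\textbf{Step 1: the basic projector identities.} First record facts about the idempotents $P_j=f_jf_j^\dagger$ and $Q_j=f_j^\dagger f_j$. These follow from the Grassmann identities \eqref{grassmann_id} and the duality identities \eqref{duality_id}:
\begin{itemize}
\item $f_j^2=(f_j^\dagger)^2=0$;
\item $P_j+Q_j=1$;
\item $P_j^2=P_j$;
\item $f_jP_j=0$, $f_j^\dagger P_j=f_j^\dagger$, and $P_jf_j=f_j$;
\item for $j\neq k$, the factors $f_j,f_j^\dagger$ anticommute with $f_k,f_k^\dagger$, so $P_j$ commutes with $f_k$ and $f_k^\dagger$.
\end{itemize}
Since $(f_j^\dagger)^\dagger=f_j$ and reversion is an antiautomorphism, we get $I^\dagger=I$. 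Also $\varphi_l^\dagger=I\,f_n^{l_n}\cdots f_1^{l_1}$.

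\textbf{Step 2: the main computation.} This is $I f_n^{l_n}\cdots f_1^{l_1}(f_1^\dagger)^{j_1}\cdots(f_n^\dagger)^{j_n}I$. I would argue qubit by qubit, using the commutations from Step 1 to bring each factor with index $m$ next to $P_m$. The sign changes from reordering occur equally on both sides and cancel, or can be tracked by an overall sign that turns out to be $+1$. The local computation $P_m f_m^{a}(f_m^\dagger)^{b}P_m$ for $a,b\in\{0,1\}$ gives:
\begin{itemize}
\item $P_m$ when $a=b=0$;
\item $0$ when $a=0,b=1$, since $f_m^\dagger P_m$ on the left-hand $P_m$ side gives $P_mf_m^\dagger P_m=0$;
\item $0$ when $a=1,b=0$;
\item $P_m f_m f_m^\dagger P_m=P_m$ when $a=b=1$.
\end{itemize}
Hence the product is $\delta_{lj}I$. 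This sign and ordering bookkeeping is the step I expect to be the main obstacle. To keep it clean, I would first check that the relevant even-grade blocks commute and only then commute the odd factors.

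\textbf{Step 3: uniqueness.} A priori, two distinct multivectors could act identically on the span of the $\varphi_j$. To rule this out, I would show the $4^n$ elements $\varphi_k\varphi_l^\dagger$ are linearly independent. If $\sum c_{kl}\varphi_k\varphi_l^\dagger=0$, multiply on the left by $\varphi_{k'}^\dagger$ and on the right by $\varphi_{l'}$; Step 2 gives $c_{k'l'}I=0$, and $I\neq0$. There are $4^n=2^{2n}=\dim\mathcal{G}_n$ of them, so they form a basis of $\mathcal{G}_n$.

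Consequently, if $\lambda\varphi_j=0$ for all $j$, we write $\lambda=\sum c_{kl}\varphi_k\varphi_l^\dagger$ and obtain $\sum_k c_{kj}\varphi_k=0$. The $\varphi_k$ are independent, which follows again from $\varphi_{k'}^\dagger\varphi_k=\delta_{k'k}I$, so every $c_{kj}=0$. The map $M\mapsto\lambda_M$ is therefore a bijection, in fact an algebra isomorphism $\mathcal{M}_{2^n}(\mathbb{C})\cong\mathcal{G}_n$, which gives uniqueness.
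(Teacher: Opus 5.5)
Your proof is correct. Its skeleton is the paper's: the relation $\varphi_l^\dagger\varphi_j=\delta_{lj}I$, then the matrix units $\varphi_k\varphi_l^\dagger$, then a dimension count. What differs is how the key relation is proved.

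The paper first derives explicit creation/annihilation rules: $f_k^\dagger\varphi_l=(-1)^{l_1+\cdots+l_{k-1}}\varphi_{l+2^{k-1}}$ or $0$, and $f_k\varphi_l=(-1)^{l_1+\cdots+l_{k-1}}\varphi_{l-2^{k-1}}$ or $0$. For $k\neq l$ it takes the first bit $m$ with $k_m\neq l_m$ and uses reversion to assume $k_m=1$, $l_m=0$. It then applies the annihilators $f_1^{k_1},f_2^{k_2},\dots$ one by one until $f_m$ hits a state whose $m$-th bit is $0$. For $k=l$ the annihilators bring $\varphi_k$ back to $\varphi_0=I$.

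You instead factor $I=P_1\cdots P_n$ into commuting idempotents and reduce everything to the four local sandwiches $P_mf_m^a(f_m^\dagger)^bP_m$. When $l=j$ you peel the even blocks $f_m^{l_m}(f_m^\dagger)^{l_m}$ out from the inside, which introduces no sign. When $l\neq j$ the sign does not matter, because a zero factor appears.

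What each route buys:
\begin{itemize}
\item Your route avoids the parity prefactors entirely and treats both off-diagonal cases symmetrically, without the reversion trick.
\item The paper's route needs more sign bookkeeping, but it yields the action of $f_k$ and $f_k^\dagger$ (hence of $e_j$ and $e_{n+j}$) on basis states. The paper reuses this in the Pauli-group theorem to identify $e_j$ with signed bit flips.
\end{itemize}

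Your uniqueness argument is also more explicit than the paper's bare assertion that $M\mapsto\lambda_M$ is injective. You should still record the small facts you use implicitly. $I^2=I$ and $I^\dagger=I$ need the pairwise commutation of the $P_m$, which your last bullet gives. $I\neq 0$ follows from $P_m=\frac{1}{2}(1+ie_me_{n+m})$, which gives $(I)_0=2^{-n}$.
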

\begin{proof}
    {We} 
 start by computing the multivector $f_k^\dagger\varphi_l$
    \vspace{-6pt}

\begin{align}
        f_k^\dagger\varphi_l =& f_k^\dagger (f_1^\dagger)^{l_1} \cdots (f_n^\dagger)^{l_n}I \\ \nonumber
        =&(-1)^{l_1+\cdots l_{k-1}}(f_1^\dagger)^{l_1}\cdots(f_k^\dagger)^{1+l_k}\cdots (f_n^\dagger)^{l_n}I ~~ \text{(Because }(f_j)_j \text{ anticommute from \eqref{grassmann_id}})\\ \nonumber
        \end{align}

\vspace{-20pt}
        {Since} $(f_k^\dagger)^2=0$ (from \eqref{grassmann_id}) we get
        \begin{equation}\label{fdagger_operation}
        f_k^\dagger\varphi_l=\begin{cases}
            0 &\text{ if }l_k = 1 \text{ (because } (f_k^\dagger)^2=0 \text{ from \eqref{grassmann_id})}\\ 
            (-1)^{l_1+\cdots l_{k-1}}\varphi_{l+2^{k-1}}  &\text{ else} 
        \end{cases}
    \end{equation}
{We} compute next the multivector $f_k\varphi_l$
\vspace{-6pt}

\begin{align}
    &f_k\varphi_l = f_k (f_1^\dagger)^{l_1} \cdots (f_n^\dagger)^{l_n}I \\ \nonumber
    =&(-1)^{l_1+\cdots l_{k-1}}(f_1^\dagger)^{l_1}\cdots f_k(f_k^\dagger)^{l_k}\cdots (f_n^\dagger)^{l_n}I ~~\text{Because }f_k\text{ and }f_j^\dagger,j\neq k \text{ anticommute from \eqref{duality_id}}\\ \nonumber
\end{align}

\vspace{-20pt}
{If} $l_k=0$, then we get
\begin{align}
    f_k\varphi_l = (-1)^{l_1+\cdots l_n}(f_1^\dagger)^{l_1}\cdots (f_n^\dagger)^{l_n}f_kI
\end{align}
{However}
\begin{align}\nonumber
    f_kI =& f_kf_1f_1^\dagger\cdots f_nf_n^\dagger\\ \nonumber
    =&f_1f_1^\dagger \cdots f_k^2f_k^\dagger \cdots f_nf_n^\dagger\\ 
    =& f_1f_1^\dagger \cdots 0f_k^\dagger \cdots f_nf_n^\dagger\\ \nonumber
    =&0 
\end{align}
{So} \begin{align}
    f_k\varphi_l=0
\end{align}
{Otherwise}, we use \eqref{duality_id} to get
\begin{align}\nonumber
    f_k\varphi_l &=(-1)^{l_1+\cdots l_{k-1}}(f_1^\dagger)^{l_1}\cdots (1-f_k^\dagger f_k)\cdots (f_n^\dagger)^{l_n}I\\ 
    &=(-1)^{l_1+\cdots l_{k-1}}(f_1^\dagger)^{l_1}\cdots (f_k^\dagger)^0\cdots (f_n^\dagger)^{l_n}I-(-1)^{l_1+\cdots l_n}(f_1)^{l_1}\cdots (f_n^\dagger)^{l_n}f_kI\\ \nonumber
    &=(-1)^{l_1+\cdots l_{k-1}}\varphi_{l-2^{k-1}}
\end{align}
{So} we conclude that
\begin{equation}
\label{f_operation}
    f_k\varphi_l=\begin{cases}
            0 &\text{ if }l_k = 0\\
            (-1)^{l_1+\cdots l_{k-1}}\varphi_{l-2^{k-1}}  &\text{ else}
        \end{cases}
\end{equation}

When seen in quantum context, $f_k$ and $f_k^\dagger$ can be interpreted as modified versions of creation and annihilation operators.

Next, we shall verify that $\varphi_k^\dagger\varphi_l = \delta_{k,l}I$.

Let $k\neq l \leq 2^n$, let $1\leq j \leq n$ be the smallest integer such that $l_j \neq k_j$ (so \linebreak  $l_1=k_1,...,l_{j-1}=k_{j-1}$), since we can always switch between $k$ and $l$ by considering $(\varphi_k^\dagger \varphi_l)^\dagger =\varphi_l^\dagger \varphi_k$, we can assume without loss for generality that $l_j=0$ and $k_j=1$. Following \eqref{f_operation} \mbox{we get}
\begin{align} 
    \varphi_k^\dagger \varphi_l =& I^\dagger f_n^{k_n}\cdots f_1^{k_1} \varphi_l\\ 
    =&(-1)^aIf_n^{k_n} \cdots f_{j+1}^{k_{j+1}}f_j\varphi_{l'} \nonumber
\end{align}
where $l' = l - k_1 - k_22-\cdots k_{j-1}2^{j-2}$, since $l'_j=l_j=0$, we get $f_j\varphi_{l'}=0$. Therefore 
\begin{align}
    \varphi_k^\dagger \varphi_l = 0
\end{align}
{On} the other hand, we have in the case of $k=l$
\begin{align}
    \varphi_{k}^\dagger\varphi_k =& I\varphi_0I=I^3=I
\end{align}
{We} conclude that
\begin{equation}
    \varphi_k^\dagger\varphi_l = \delta_{k,l}I
\end{equation}
{Since} $\varphi_j I = \varphi_j$, we get multivector equivalent of the elementary matrices 
\begin{align}
    \big(\varphi_k\varphi_l^\dagger\big)\varphi_j = \delta_{j,l}\varphi_k
\end{align}
{which} proves the existence part of the theorem. It also proves the given formula in \eqref{matrix_mv}. As for the uniqueness, we note that the mapping $M \mapsto \lambda_M$ is an injective morphism from $\mathcal{M}_{2^n}(\mathbb{C})$ to $\mathcal{G}_n$. Since both are of dimension $2^{2n}$, the mapping is an isomorphism. 
\end{proof}

This result shows that Geometric Algebra provides a complete and faithful representation
of linear transformations on quantum registers.
In particular, it allows us to transfer familiar algebraic properties of matrices directly
into the Geometric Algebra setting.
We summarize below the main structural consequences of this correspondence, which will
be repeatedly used in the remainder of the {paper.} 
\begin{align}
    & \quad \lambda_{M+N} = \lambda_{M}+\lambda_{N}\\  
    & \quad\lambda_{MN} = \lambda_{M}\lambda_{N}\\ 
    & \quad \lambda_{M^\dagger} = \lambda_M^\dagger \quad\text{where } M^\dagger\text{ is the Hermitian transpose of }M\\ 
    & \quad  \lambda_{Id}=1\\ 
    & \quad\text{tr}(M) = 2^n(\lambda_M)_0 
\end{align}
\subsection{Tensor Product of Quantum Gates}

In the complex Clifford algebra framework \cite{Hrdina et al. (2022)}, parallel quantum gates are represented as tensor products. An \( n \)-qubit state \( |i_{1} \dots i_n\rangle =|i_{1}\rangle \otimes \dots \otimes  |i_{n}\rangle \) is the geometric product of individual qubit representations, so {\( |\phi\rangle \otimes |\psi\rangle \)} 
becomes \( \phi \psi \), with \( \phi, \psi \) in orthogonal subspaces.

For gates \( \lambda \otimes \mu \), acting on {\( |\phi\rangle \otimes |\psi\rangle \)} gives \( \lambda |\phi \rangle\otimes \mu |\psi\rangle \), represented as \( \lambda \phi \mu \psi \). Due to the geometric product's noncommutativity, for blades \( e_A, e_B \), we have:
\[
e_A e_B = (-1)^{|A||B|-|A \cap B|} e_B e_A.
\]

\begin{Theorem}[\cite{Hrdina et al. (2022)}]\label{tensor}
The tensor product \( \lambda_1 \otimes \dots \otimes \lambda_n \), with \( \lambda_k \in \{ f_k f_k^\dagger, f_k^\dagger f_k, f_k, f_k^\dagger \} \), is \( (-1)^s \lambda_1 \dots \lambda_n \), where \( s = \sum_i |S_i| \), and:
\[
S_i = \{ \ell < i : \lambda_\ell = f_\ell \text{ or } \lambda_\ell = f_\ell^\dagger f_\ell \} \quad \text{if } \lambda_i = f_i \text{ or } \lambda_i = f_i^\dagger.
\]
\end{Theorem}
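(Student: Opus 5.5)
The plan is to verify the identity on the computational basis, using the action formulas \eqref{fdagger_operation} and \eqref{f_operation} from the proof of Theorem~\ref{isomorphism}. By that theorem, a multivector is determined by its left action on the states $\varphi_j$. So it suffices to check that $(-1)^s\lambda_1\cdots\lambda_n\varphi_j$ equals the image of $|j\rangle$ under the tensor-product operator $\lambda_1\otimes\cdots\otimes\lambda_n$. That operator acts qubit by qubit. Each one-qubit factor is one of the elementary maps $|0\rangle\langle 0|$, $|1\rangle\langle 1|$, $|0\rangle\langle 1|$, $|1\rangle\langle 0|$, corresponding to $f_kf_k^\dagger$, $f_k^\dagger f_k$, $f_k$, $f_k^\dagger$ respectively. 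It therefore sends $|j\rangle$ either to $0$ or to a single basis state $|j'\rangle$ with coefficient $+1$.

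First I would record the single-site actions. From \eqref{fdagger_operation} and \eqref{f_operation}:
\begin{itemize}
\item $f_kf_k^\dagger\varphi_l=\varphi_l$ if $l_k=0$, and $0$ otherwise. The two signs $(-1)^{l_1+\cdots+l_{k-1}}$ cancel.
\item $f_k^\dagger f_k\varphi_l=\varphi_l$ if $l_k=1$, and $0$ otherwise.
\item $f_k$ lowers bit $k$ from $1$ to $0$ with sign $(-1)^{l_1+\cdots+l_{k-1}}$.
\item $f_k^\dagger$ raises bit $k$ from $0$ to $1$ with the same type of sign.
\end{itemize}
In particular, the projectors introduce no sign and do not change any bits. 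The vanishing conditions of $\lambda_1\cdots\lambda_n\varphi_j$ therefore coincide with those of the tensor product, and only the global sign needs to be tracked.

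Next I apply the product right to left: $\lambda_n$ first, then $\lambda_{n-1}$, and so on. When $\lambda_i$ (an $f_i$ or $f_i^\dagger$) acts, the current state has bits $\ell<i$ still equal to the original $j_\ell$, because only the factors $\lambda_{i+1},\dots,\lambda_n$ have acted so far. The sign picked up is therefore $(-1)^{\sum_{\ell<i} j_\ell}$. On the nonzero branch each $j_\ell$ is fixed by $\lambda_\ell$:
\begin{itemize}
\item $j_\ell=1$ exactly when $\lambda_\ell\in\{f_\ell,\,f_\ell^\dagger f_\ell\}$;
\item $j_\ell=0$ exactly when $\lambda_\ell\in\{f_\ell^\dagger,\,f_\ell f_\ell^\dagger\}$.
\end{itemize}
Hence $\sum_{\ell<i}j_\ell=|S_i|$, and the total sign is $(-1)^{\sum_i|S_i|}=(-1)^s$. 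Multiplying by $(-1)^s$ makes the coefficient $+1$, which matches the tensor product. By linearity and the uniqueness part of Theorem~\ref{isomorphism}, the identity follows.

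The main obstacle is making sure the paper's convention for $\otimes$ matches this. Two things must be checked: the bit ordering, namely that $\lambda_k$ acts on bit $j_k$, the position of $f_k^\dagger$ in $\varphi_j$; and that the tensor product is defined via the matrix correspondence of Theorem~\ref{isomorphism}, so that the coefficient is exactly $+1$. I would fix these at the outset. If the convention instead orders qubits with the most significant bit first, the argument is unchanged, since only which index counts as "$<i$" matters, and that agrees with the definition of $S_i$ once the labelling is fixed.
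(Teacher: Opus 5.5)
Your proof is correct. The paper gives no argument for this statement: it is quoted from Hrdina et al., so there is no internal proof to compare against. What you provide is a self-contained derivation from results the paper does establish. These are the action formulas \eqref{fdagger_operation} and \eqref{f_operation}, together with the faithfulness in Theorem~\ref{isomorphism}. A multivector is determined by its left action on the $\varphi_j$, because $M\mapsto\lambda_M$ is bijective and $\lambda_M$ acts on the $\varphi_j$ as $M$.

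The decisive observation is the one you make. The product acts right to left and $\lambda_m$ only touches bit $m$, so $\lambda_i$ sees the original bits $j_1,\dots,j_i$. This is the real reason the vanishing conditions coincide with those of the tensor product; the fact that the projectors carry no sign is not enough on its own. On the nonzero branch, the same observation turns the string $(-1)^{j_1+\cdots+j_{i-1}}$ into $(-1)^{|S_i|}$.

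The convention worry you raise is settled by the paper's own description of tensor products, $(\lambda_1\otimes\cdots\otimes\lambda_n)(\phi_1\cdots\phi_n)=\lambda_1\phi_1\cdots\lambda_n\phi_n$ with $\phi_k=(f_k^\dagger)^{j_k}f_kf_k^\dagger$. Since $f_kf_k^\dagger$ commutes with $f_m^\dagger$ for $m\neq k$, one has $\phi_1\cdots\phi_n=\varphi_j$. Each $\lambda_k\phi_k$ is either $0$ or the updated one-qubit factor with coefficient $+1$. So that definition agrees with your matrix-unit reading.

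That description also gives an equally short alternative route. Commute each odd factor $\lambda_i$ leftward past $\phi_1,\dots,\phi_{i-1}$, picking up a sign for each odd $\phi_\ell$, i.e.\ each $j_\ell=1$. The bookkeeping is the same. Your version has the merit of reusing the creation/annihilation formulas already proved in Theorem~\ref{isomorphism}, and of showing $s$ directly as the accumulated Jordan--Wigner parity.
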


\section{The Pauli Group in Geometric Algebra} \label{pauli}
The Pauli group and Pauli matrices are fundamental in quantum computing \cite{N&C2010}, as they form a basis for describing quantum states and operations. The Pauli matrices ($X$, $Y$, $Z$) represent single-qubit operations, enabling key transformations such as bit flips and phase shifts. More generally, tensor products of Pauli matrices generate the Pauli group, which plays a central role in quantum computing.

Particularly important are commuting subsets of Pauli operators. Maximal commuting Pauli subsets define simultaneously diagonalizable structures and naturally generate generalized rotations acting on multi-qubit systems. Such decompositions have important applications in quantum circuit synthesis and Hamiltonian simulation, where unitary operators may be approximated or implemented through products of commuting Pauli rotations. In particular, Hegde {et al.} introduced a decomposition framework based on Pauli decompositions over commuting subsets (PDCS), demonstrating efficient implementations of quantum gates, state preparation procedures, and quantum simulations using commuting Pauli rotors \cite{Hegde2016}.

For an $n$-qubit system, maximal commuting subsets of the Pauli group contain at most $2^n-1$ nontrivial Pauli operators. These applications illustrate that the algebraic structure of commuting Pauli subsets can be exploited effectively for unitary approximation and quantum gate synthesis.

The Pauli matrices for $1$-qubit are given by
$$\quad X= \begin{pmatrix}
    0 & 1 \\
    1 & 0
\end{pmatrix}; \quad Y= \begin{pmatrix}
    0 & -i \\
    i & 0
\end{pmatrix};\quad Z=\begin{pmatrix}
    1 & 0 \\
    0 & -1
\end{pmatrix}$$

The single-qubit Pauli group is defined as the group generated by the Pauli matrices 
$$\mathcal{P}_1 =\langle\{X,Y,Z\} \rangle=\big\{i^kU|k\in \{0,1,2,3\};U\in \{I,X,Y,Z\}\big\}$$

For a system of $n$-qubits, the Pauli group is defined as the group generated by all possible tensor products of different $1$-qubit Pauli matrices
\begin{align} \nonumber
   \mathcal{P}_n =& \langle\big\{ U_1 \otimes \cdots \otimes U_n | U_i \in \{X,Y,Z\} \big\}\rangle\\
=& \big\{i^k U_1 \otimes \cdots \otimes U_n| k \in \{0,1,2,3\};U_i \in \{I,X,Y,Z\} \big\} 
\end{align}

We recall from the elementary unitaries in GA
\begin{align}
    \lambda_X=e_1 \quad
    \lambda_Y=-e_2 \quad \lambda_Z=ie_1e_2
\end{align}

So the single-qubit Pauli group can be seen as the natural group of blades:
$$\mathcal{P}_1= \{ i^ke_J | k \in\{0,1,2,3\}, J \subset \{1,2\}\}$$

Since the $1$-qubit Pauli operators admit simple multivector representatives, it is natural to ask
whether the $n$-qubit Pauli group can be characterized simply within the GA.
The next Theorem provides a complete characterization of the Pauli group as a group of
blades, up to a global phase.

\begin{Theorem}
    In $\mathcal{G}_n$, the Pauli group is the group of blades, up to a global phase multiple of $\pi/2$ 
    \begin{equation}
        \mathcal{P}_n = \{ i^ke_J| \quad 0\leq k \leq 3, J \subset \{1,\dots,2n\}\}
    \end{equation}
\end{Theorem}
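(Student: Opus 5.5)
We will prove the two inclusions between the set $\mathcal{B}_n=\{i^k e_J \mid 0\le k\le 3,\ J\subset\{1,\dots,2n\}\}$ and the image of $\mathcal{P}_n$ under the isomorphism $M\mapsto\lambda_M$ of Theorem~\ref{isomorphism}. First we note that $\mathcal{B}_n$ is a group. It is closed under products because $e_Je_K=\pm e_{J\triangle K}$; this follows by moving generators past each other with $e_je_k=-e_ke_j$ for $j\neq k$ and using $e_j^2=1$. Inverses exist because $e_J^2=\pm1$. Its cardinality is $4\cdot 2^{2n}=4^{n+1}$, the same as $|\mathcal{P}_n|$.

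\textbf{Step 1: single-qubit Paulis on qubit $k$.} For each qubit $k$, we compute the multivector of the operator $I\otimes\cdots\otimes U\otimes\cdots\otimes I$ with $U$ in position $k$. By analogy with the one-qubit case, we expect the representatives $X_k \leftrightarrow \pm e_k\,\omega_k$, $Y_k\leftrightarrow \pm e_{n+k}\,\omega_k$ (up to a phase), and $Z_k\leftrightarrow i e_ke_{n+k}$, where $\omega_k$ is a blade correcting the sign.

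The sign correction is needed because $f_k$ anticommutes past $f_1^\dagger,\dots,f_{k-1}^\dagger$. We will verify the formulas directly on the basis states $\varphi_l$ using \eqref{fdagger_operation} and \eqref{f_operation}:
\begin{itemize}
\item $(f_k+f_k^\dagger)\varphi_l=(-1)^{l_1+\cdots+l_{k-1}}\varphi_{l\oplus 2^{k-1}}$.
\item $\lambda_{Z_j}=f_jf_j^\dagger-f_j^\dagger f_j=ie_je_{n+j}$ acts on $\varphi_l$ as $(-1)^{l_j}$.
\end{itemize}
Multiplying by $\prod_{j<k} ie_je_{n+j}$ removes the parity sign $(-1)^{l_1+\cdots+l_{k-1}}$. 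Hence $X_k\leftrightarrow \big(\prod_{j<k} ie_je_{n+j}\big)e_k$, which is a Jordan--Wigner-type string, and it lies in $\mathcal{B}_n$. Similarly $Y_k=iX_kZ_k$ lies in $\mathcal{B}_n$. Computing these signs consistently with Theorem~\ref{tensor} is the main bookkeeping obstacle. We plan to settle it by checking the action on every $\varphi_l$ rather than manipulating tensor signs abstractly.

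\textbf{Step 2: first inclusion.} Every element of $\mathcal{P}_n$ is $i^k$ times a product of the commuting operators $U_1\otimes I\cdots$, $\ldots$, $I\otimes\cdots\otimes U_n$. Each factor lies in $\mathcal{B}_n$ by Step 1. Since the correspondence of Theorem~\ref{isomorphism} is multiplicative ($\lambda_{MN}=\lambda_M\lambda_N$) and $\mathcal{B}_n$ is closed under products, we conclude $\mathcal{P}_n\subset\mathcal{B}_n$.

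\textbf{Step 3: reverse inclusion.} By Step 1, for each $k$ the group $\lambda(\mathcal{P}_n)$ contains $\lambda_{Z_k}=ie_ke_{n+k}$ and the string $\big(\prod_{j<k}ie_je_{n+j}\big)e_k$. Multiplying the string by the $Z_j$'s for $j<k$ cancels the prefix up to phase, so $e_k\in\lambda(\mathcal{P}_n)$ up to phase. Then $e_{n+k}$ is obtained, up to phase, from $e_k\cdot e_ke_{n+k}$. Products of the generators $e_1,\dots,e_{2n}$ together with the phase $i$ give all of $\mathcal{B}_n$, hence $\mathcal{B}_n\subset\lambda(\mathcal{P}_n)$.

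This inclusion can alternatively be concluded from the cardinality count $|\mathcal{B}_n|=|\mathcal{P}_n|=4^{n+1}$ together with injectivity of $M\mapsto\lambda_M$.
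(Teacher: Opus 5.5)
Your proof is correct and is essentially the paper's argument read in the opposite direction. The paper starts from the blades: it uses \eqref{fdagger_operation} and \eqref{f_operation} to show that each $e_j$ acts, up to a phase, as a diagonal sign matrix times a bit flip (hence is Pauli), and then concludes by the count $4^{n+1}$. You start from the Pauli side: you show $X_k$ and $Z_k$ are blades via the Jordan--Wigner string $\lambda_{X_k}=\bigl(\prod_{j<k}ie_je_{n+j}\bigr)e_k$, and conclude either by the same count or by directly generating $e_k$ and $e_{n+k}$.

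Your explicit identification of the sign $(-1)^{l_1+\cdots+l_{k-1}}$ with the action of $\prod_{j<k}Z_j$ is in fact what closes this step. The paper instead asserts that every diagonal $\pm1$ matrix is a $Z$-string because ``both sets are of cardinality $2^n$''. That is false for $n\ge 2$: there are $2^{2^n}$ such matrices, and for example $\mathrm{diag}(1,1,1,-1)$ is not Pauli. Your computation supplies the missing justification.
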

\begin{proof}
    Let us define the following groups
    \begin{align}
        \mathcal{P}_n^M =&\langle\big\{ i^kU_1 \otimes \cdots \otimes U_n | 0\leq k\leq 3 ,U_i \in \{I,X,Y,Z\} \big\}\rangle \subset \mathcal{M}_{2^n}(\mathbb{C})\\
        \mathcal{P}_n^G=&\{ \lambda_P |\quad P \in \mathcal{P}_n^M\}\\
        \mathcal{E}=&\{ i^ke_J| \quad 0\leq k \leq 3, J \subset \{1,\dots,2n\}\}
    \end{align}
    {The} goal is to prove that $\mathcal{P}_n^G=\mathcal{E}$.
    
    {From} 
 the definition of the Witt basis, we know that $e_j = f_j +f_j^\dagger $ and $e_{j+n} = i(f_j - f_j^\dagger)$ for $j=1,...,n$. From \eqref{fdagger_operation} and \eqref{f_operation} we have
    \begin{align}
        e_j \varphi_l&= (-1)^{l_1+ \cdots l_{j-1}} \begin{cases}
            \varphi_{l - 2^{j-1}} & \text{ if } \quad l_j=1\\
            \varphi_{l + 2^{j-1}} & \text{ if } \quad l_j=0\\
        \end{cases}\\ 
        e_{j+n}\varphi_l &=i(-1)^{l_1+ \cdots l_{j}} \begin{cases}
            \varphi_{l - 2^{j-1}} & \text{ if } \quad l_j=1\\
            \varphi_{l + 2^{j-1}} & \text{ if } \quad l_j=0\\
            \end{cases}
    \end{align}
    
{Notice} that the mapping
$$\varphi_l \mapsto\begin{cases}
    \varphi_{l-2^{j-1}} &\text{ if} \quad l_j=1 \\
     \varphi_{l+2^{j-1}}&\text{ if} \quad l_j=0
\end{cases}$$
corresponds to the $j$-th qubit flip, in other words to the operation $X_j$. The $(-1)^{l_1+\cdots}$ factor corresponds to a diagonal matrix whose entries are either $1$ or $-1$, more precisely, we can conclude that the multivector $e_j , j=1,\dots,2n$ corresponds to the matrix 
\begin{align}
    A_j = i^{\varepsilon^j}S_{\alpha^j_1,\dots \alpha^j_{2^n}}X_{j -\varepsilon^jn}
\end{align}
where 
\begin{align}
    \varepsilon^j &= 1_{j> n}\\ 
    \alpha_l^j &=l_1+\cdots l_{j-1}+\varepsilon^jl_j\\
    S_{\alpha_1^j,\dots\alpha_{2^n}^j} &= \text{diag}((-1)^{\alpha_1^j},\cdots,(-1)^{\alpha_{2^n}^j})\\ \nonumber
\end{align}

\vspace{-20pt}
{Let} us verify that $S_{\alpha_1,\cdots \alpha_{2^n}}$ is a Pauli matrix; for that, we consider $\mathcal{S}$ as the set of possible matrices of that form, i.e., the diagonal matrices of diagonal coefficients either $-1$ or $1$. On the other hand, we consider $\mathcal{Z}$ the set of Pauli matrices formed only from the tensor products of $Z$ and $I_2$. Since $Z$ is diagonal with diagonal coefficients being $1$ and $-1$, then $\mathcal{Z} \subset \mathcal{S}$, however both sets are of cardinality $2^n$, thus we conclude that $\mathcal{S} = \mathcal{Z} \subset \mathcal{P}_n^M$.
Thus $A_j \in \mathcal{P}_n^M$, which means that
$$\forall j=1,\dots,2n: e_j \in \mathcal{P}_n^{G}$$
{Therefore}
$$\mathcal{E} \subset\mathcal{P}_n^{G}$$
{Since} both groups have cardinality $4^{n+1}$, we conclude that
$$\mathcal{E} =\mathcal{P}_n^{G}$$
\end{proof}

The characterization of the Pauli group as a group of blades highlights the importance of understanding commutation relations between multivectors.
In the Pauli formalism, commutation and anticommutation relations encode fundamental information about compatibility of observables and error propagation.

\begin{Proposition}\label{prop_commutation}
    Let $J,K \subset \{1,\dots,2n\}$,
    \begin{align}
        e_Je_K = e_Ke_J \iff |K\cap J| = |K||J| \mod 2
    \end{align}
\end{Proposition}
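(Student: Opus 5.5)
The plan is to prove the sign formula $e_Je_K = (-1)^{|J||K|-|J\cap K|}e_Ke_J$, which the paper quoted before Theorem~\ref{tensor} without proof, and then read off the equivalence. The only relations needed are those of the orthonormal generators: $e_a^2=1$, and $e_ae_b=-e_be_a$ for $a\neq b$. These follow from the definition of the geometric product and the statement that perpendicular vectors anticommute.

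\textbf{Step 1 (single generator past a blade).} For $a\in\{1,\dots,2n\}$ and $K=\{k_1<\cdots<k_m\}$, move $e_a$ through $e_K=e_{k_1}\cdots e_{k_m}$ one factor at a time. Each factor $e_{k_i}\neq e_a$ contributes a sign $-1$. If $a\in K$, the factor $e_a$ itself commutes with $e_a$ and contributes no sign. Hence
\[
e_ae_K=(-1)^{|K|-[a\in K]}e_Ke_a .
\]

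\textbf{Step 2 (whole blade).} Write $e_J=e_{j_1}\cdots e_{j_r}$ and apply Step~1 to each factor in turn. Start with the rightmost factor $e_{j_r}$ and move it past $e_K$, then $e_{j_{r-1}}$, and so on; the factors of $e_J$ keep their relative order throughout. Multiplying the signs gives
\[
e_Je_K=(-1)^{\sum_{a\in J}(|K|-[a\in K])}e_Ke_J=(-1)^{|J||K|-|J\cap K|}e_Ke_J .
\]

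\textbf{Step 3 (conclusion).} The blade $e_Ke_J$ equals $\pm e_{J\triangle K}$, a nonzero basis element, so it is not zero. Therefore $e_Je_K=e_Ke_J$ holds exactly when the sign is $+1$. This is the case precisely when $|J||K|-|J\cap K|$ is even, that is, when $|K\cap J|\equiv|K||J| \pmod 2$.

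Step~1 is the only place needing care: one must correctly account for the single commuting factor when $a\in K$. After that the argument is routine bookkeeping of signs.
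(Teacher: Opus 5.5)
Your proof is correct and follows essentially the same route as the paper's appendix proof. You commute a single generator past $e_K$ to pick up the sign $(-1)^{|K|-1_{(j\in K)}}$, then sum these exponents over $j\in J$ to get $|J||K|-|J\cap K|$, and read off the equivalence; your Step~3 remark that $e_Ke_J=\pm e_{J\triangle K}\neq 0$ makes explicit the small point, left implicit in the paper, that equality forces the sign to be $+1$.
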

\begin{proof}
    The result follows by counting the sign changes produced when exchanging the vectors composing the two blades. The complete derivation is provided in \mbox{Appendix \ref{prop_commutation_proof}.}
\end{proof}
\subsection*{Interpretation}

The characterization of the Pauli group as a group of blades gives a natural geometric interpretation to Pauli operators. In this framework, Pauli elements are no longer viewed merely as matrices acting on a Hilbert space, but rather as oriented geometric objects inside the Geometric Algebra.

Vectors represent oriented directions, bivectors represent oriented planes, and higher-grade blades represent oriented higher-dimensional geometric objects. The Pauli group therefore appears as a discrete family of geometric primitives together with their \mbox{orientation reversals.}

For example, the subgroup generated by the vector \(e_1\) and the bivector \(e_{23}\) contains the elements
\[
\{\pm 1,\pm e_1,\pm e_{23},\pm e_{123}\}
\]
{Geometrically}, this subgroup contains the direction \(e_1\), the oriented plane \(e_{23}\), and the oriented volume element \(e_{123}\) generated by their geometric product. The negative elements correspond naturally to orientation reversals of the same geometric objects.

This point of view also gives a geometric interpretation of commutation relations, illustrated in Figure~\ref{fig:geometric_commutation}. In Figure~\ref{fig:geometric_commutation}a, the vector $e_2$ is orthogonal to the plane generated by the bivector $e_{13}$. Rotations in this plane therefore leave $e_2$ invariant, which corresponds algebraically to the commutation relation 
\[ e_2e_{13}=e_{13}e_2 \] 
{By} contrast, Figure~\ref{fig:geometric_commutation}b shows a vector contained in the rotation plane. In this case, the vector is transformed nontrivially by the rotation, corresponding to an anticommutation relation. For instance, 
\[ e_3e_{13}=-e_{13}e_3. \] 
{Thus}, commutation and anticommutation of Pauli blades can be interpreted as geometric compatibility or incompatibility between the corresponding oriented subspaces.

\vspace{-6pt}
\begin{figure}[htbp]
\hspace{-20pt}
\begin{minipage}{0.43\textwidth}\label{a}
    \centering
    \includegraphics[width=0.7\linewidth]{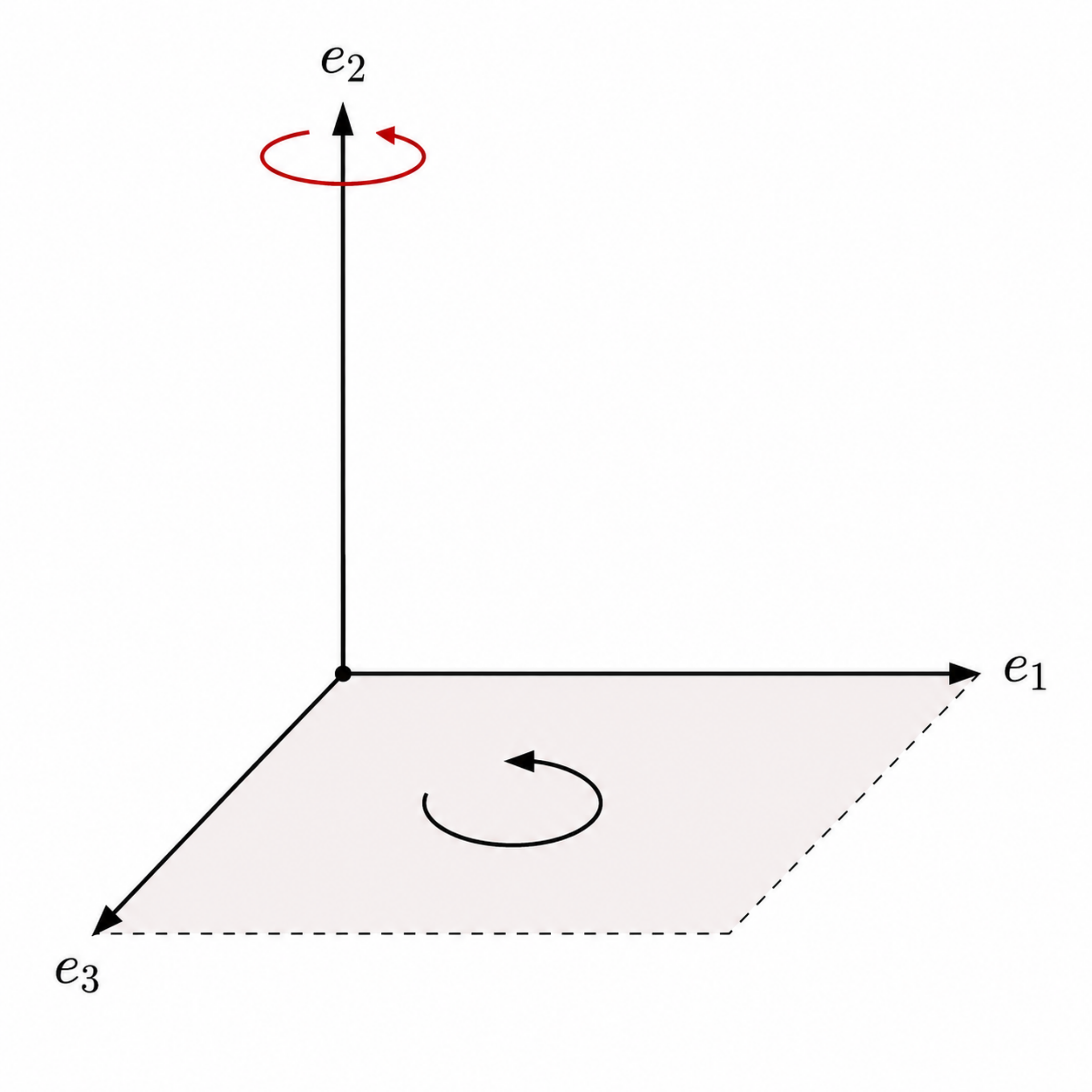}
    \textbf{(a)}
\end{minipage}
\hfill
\begin{minipage}{0.43\textwidth}\label{b}
    \centering
    \includegraphics[width=0.7\linewidth]{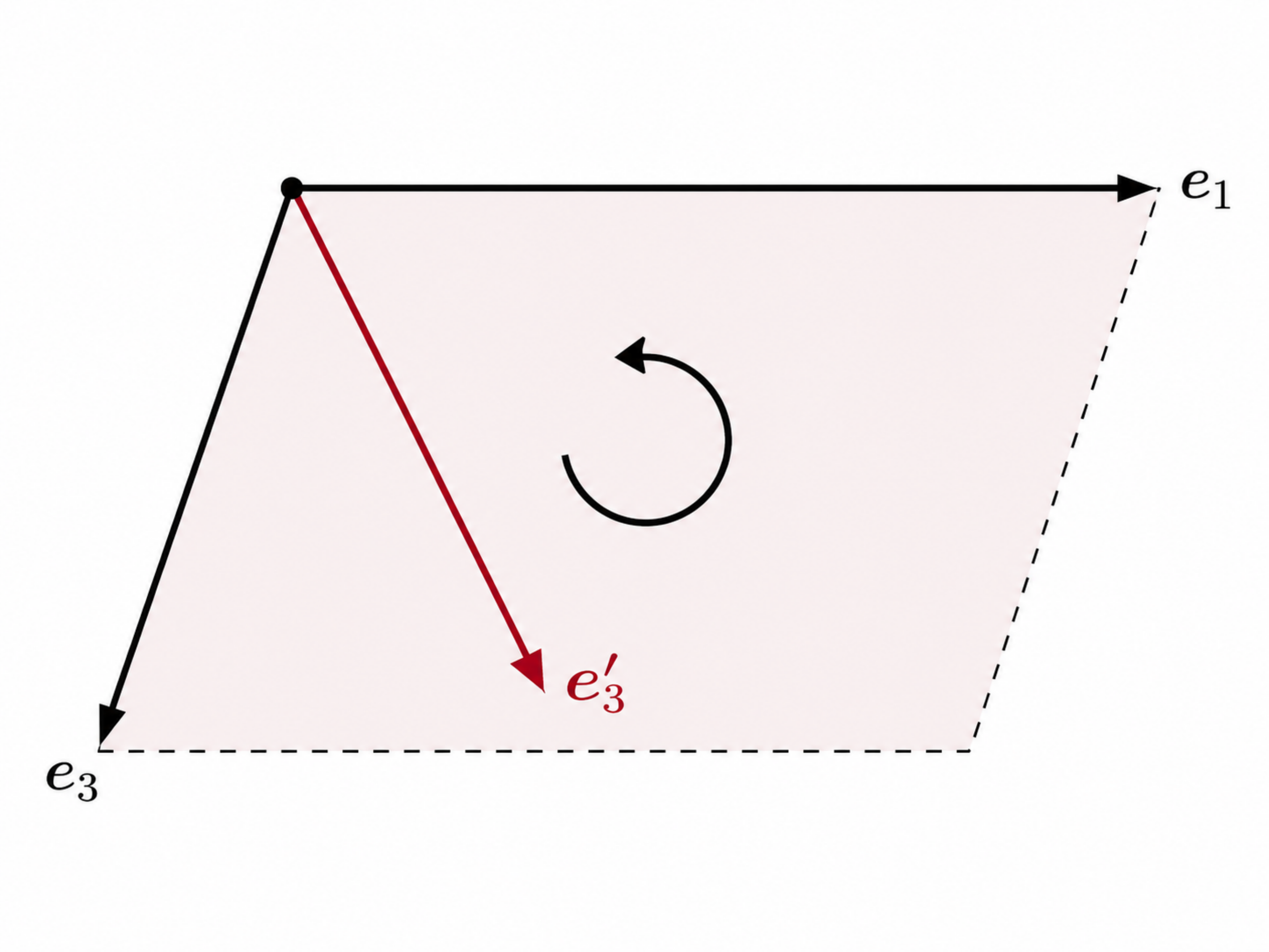}
    \textbf{(b)}
\end{minipage}

\fcaption{
Geometric interpretation of commutation and anticommutation of Pauli blades.
(\textbf{a})~A vector orthogonal to the rotation plane is left invariant, corresponding
to a commuting relation. 
(\textbf{b}) A vector contained in the rotation plane is transformed nontrivially,
corresponding to an \mbox{anticommuting relation.}
}
\label{fig:geometric_commutation}
\end{figure}

More generally, Proposition \ref{prop_commutation} shows that commutation between Pauli blades is determined entirely by the parity of their geometric overlap. In this sense, compatibility relations between Pauli operators become geometric incidence relations between \mbox{oriented subspaces.}

From this perspective, the Pauli group can be viewed as a discrete geometric structure embedded inside the GA, while Clifford operators, the subject of the next Section \ref{clifford}, act as symmetry transformations preserving this structure under conjugation.

This geometric viewpoint constitutes one of the main motivations for employing Geometric Algebra. While the same operators can be represented algebraically in other formalisms, the blade structure provides an immediate interpretation of Pauli operators and their compatibility relations in terms of oriented subspaces and their intersections.

\section{The Clifford Group in Geometric Algebra} \label{clifford}
The Clifford group plays a central role in quantum information theory \cite{N&C2010}, as it describes the class of unitary transformations that preserve the Pauli group under conjugation.

These transformations arise naturally in quantum circuit design and in the stabilizer formalism. Since Clifford operators preserve the Pauli group under conjugation, they occupy a central role in quantum information theory and quantum gate synthesis.

However, the Clifford group alone is not computationally universal \cite{N&C2010}. Quantum circuits generated solely by Clifford operations can be efficiently simulated classically through the Gottesman--Knill Theorem \cite{Gottesman (1998),Aaronson and Gottesman (2004)}. In order to achieve universal quantum computation, one must supplement the Clifford group with at least one non-Clifford gate, most commonly the \(T\)-gate \cite{N&C2010}. The resulting Clifford+\(T\) gate set forms a universal generating set for quantum computation and plays a fundamental role in modern fault-tolerant architectures.

A major challenge in quantum circuit synthesis is therefore the approximation of arbitrary unitary operators using Clifford+\(T\) circuits while minimizing both circuit depth and \(T\)-count \cite{Kliuchnikov et al. (2013),Ross and Selinger (2016),Amy et al. (2020)}, since \(T\)-gates typically constitute the dominant resource cost in fault-tolerant implementations. This problem has motivated extensive research on efficient decomposition algorithms, gate optimization techniques, and compact representations of Clifford operations.

Consequently, understanding the internal structure and decomposition properties of Clifford operators remains an important problem both mathematically and algorithmically. In this section, we investigate rotor decompositions of Clifford operators within the GA framework developed previously.

The Clifford group is usually defined as 
\begin{align}
    \mathcal{C}_n = \{ Q \in \mathcal{U}_{2^n}(\mathbb{C}) |\quad  Q\mathcal{P}_nQ^\dagger = \mathcal{P}_n \}
\end{align}

It is known that such a group is generated by the 1-qubit gates $S$, the Hadamard $H$, and also the 2-qubit CNOT gate. The Clifford group can be viewed as the group of symmetries of the Pauli group under conjugation.
In the Geometric Algebra setting, this amounts to identifying the unitary multivectors whose adjoint action preserves the set of Pauli blades.
Our main goal in this section is to characterize this group intrinsically and to describe its generators.
The following Theorem states the main structural result, whose proof occupies the
remainder of this section.
\begin{Theorem}\label{clifford_grp}
    Let $\mathcal{C}_n$ be the Clifford group in $\mathcal{G}_n$, defined as 
    $$\mathcal{C}_n=\{ \lambda \in \mathcal{G}_n | \quad \lambda \lambda^\dagger =1 \text{ and }\lambda\mathcal{P}_n\lambda^\dagger = \mathcal{P}_n\}$$
    then $\mathcal{C}_n$ is generated, up to a global phase, by the rotors of the form 
    \begin{equation}
        \rho_b = \exp\left( \frac{\pi}{4}b\right) \quad b \in\mathcal{P}_n \text{ and }b^2 =-1
    \end{equation}
\end{Theorem}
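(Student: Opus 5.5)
Our strategy is to show two inclusions. First, every rotor $\rho_b$ with $b\in\mathcal{P}_n$, $b^2=-1$, lies in $\mathcal{C}_n$. Second, conversely, the standard generators $S$, $H$ and CNOT of the Clifford group can each be written, up to a global phase, as products of such rotors. Since $\mathcal{C}_n$ is known to be generated by $S$, $H$ and CNOT up to phase, this gives the statement.

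For the first inclusion, $b^2=-1$ gives
\[
\rho_b=\cos\tfrac{\pi}{4}+b\sin\tfrac{\pi}{4}=\tfrac{1}{\sqrt2}(1+b).
\]
Since $b$ is a blade times $i^k$ with $b^2=-1$, we have $b^\dagger=-b$. Hence $\rho_b\rho_b^\dagger=\tfrac12(1+b)(1-b)=\tfrac12(1-b^2)=1$, so $\rho_b$ is unitary.

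For $P\in\mathcal{P}_n$, Proposition \ref{prop_commutation} says $P$ either commutes or anticommutes with $b$.
\begin{itemize}
\item If $P$ commutes with $b$, then $\rho_bP\rho_b^\dagger=P$.
\item If $P$ anticommutes with $b$, then $\rho_bP\rho_b^\dagger=\tfrac12(1+b)P(1-b)=\tfrac12(1+b)^2P=bP$, using $(1+b)^2=2b$.
\end{itemize}
In both cases the result is again in $\mathcal{P}_n$, because the characterization of $\mathcal{P}_n$ as the group of blades times $i^k$ shows it is closed under products. Conjugation is a bijection, so $\rho_b\mathcal{P}_n\rho_b^\dagger=\mathcal{P}_n$. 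Therefore the group generated by these rotors, together with phases, is contained in $\mathcal{C}_n$.

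For the converse, we express the generators in the blade language using the explicit one-qubit multivectors and Theorem \ref{tensor}.
\begin{itemize}
\item The paper already gives $\lambda_S=e^{i\pi/4}\tfrac{1}{\sqrt2}(1+e_1e_2)=e^{i\pi/4}\rho_{e_{12}}$, and $(e_1e_2)^2=-1$. The same holds for qubit $j$ with $e_je_{j+n}$, after checking the sign conventions.
\item The paper gives $\lambda_H=e_1\tfrac{1}{\sqrt2}(1+ie_2)$. Here $ie_2$ squares to $-1$, and $e_1$ itself, up to phase, is a product of $\pi/4$ rotors: $\rho_b^2=b$, and $e_1=-i\,(ie_1)=-i\,\rho_{ie_1}^2$. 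So $H$ is a product of rotors up to phase.
\item For CNOT, we use the standard identity
\[
\mathrm{CNOT}=e^{-i\pi/4}\exp\!\left(\tfrac{i\pi}{4}Z_cX_t\right)\exp\!\left(-\tfrac{i\pi}{4}Z_c\right)\exp\!\left(-\tfrac{i\pi}{4}X_t\right),
\]
which follows from $\mathrm{CNOT}=\tfrac12(1+Z_c+X_t-Z_cX_t)$. Each factor has the form $\exp(\pm\tfrac{\pi}{4}\,iP)$ with $P$ a Hermitian Pauli. Since $P^2=1$, we get $(iP)^2=-1$, so $\pm iP$ is a blade with square $-1$ lying in $\mathcal{P}_n$. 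Each factor is therefore some $\rho_b$.
\end{itemize}
In GA form, $Z_c$ and $X_t$ become $ie_ce_{c+n}$ and $e_t$ times Pauli sign-diagonal factors, namely the $S$-matrices from the proof of the previous theorem. This bookkeeping is harmless, because any Hermitian Pauli product is some $\pm e_J$ or $\pm ie_J$, and the argument never needs the explicit blade.

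Finally, any $\lambda\in\mathcal{C}_n$ equals, up to phase, a word in $S$, $H$ and CNOT on various qubits, and each of these is a product of rotors. This proves the theorem. The main obstacle is this last input, that $\mathcal{C}_n$ is generated by $S$, $H$ and CNOT modulo phases. We would cite it as a standard fact, as the paper does just before the statement. If a self-contained argument were preferred, the fallback is to note that the conjugation action induces a symplectic map on $\mathcal{P}_n/\langle i\rangle\cong\mathbb{F}_2^{2n}$. The rotors $\rho_b$ act there as symplectic transvections $v\mapsto v+\omega(v,b)\,b$, and transvections generate $Sp(2n,\mathbb{F}_2)$. The kernel of this action consists of the Pauli elements themselves, which are rotor products since $b'=\rho_{b'}^2$ for $b'^2=-1$, or $b'=-i\,\rho_{ib'}^2$ when $b'^2=1$. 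Elements acting trivially on all of $\mathcal{P}_n$ are scalars, since $\mathcal{P}_n$ spans $\mathcal{G}_n$. Together these close the argument.
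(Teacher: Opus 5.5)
Your argument is correct, but it follows a different route from the paper. The paper never uses the fact that $S$, $H$, CNOT generate $\mathcal{C}_n$. Instead it proves the theorem inside GA by induction on $n$:
\begin{itemize}
\item Lemma~\ref{rotor_e1} and Corollary~\ref{ueJu=e1_ueKu=en+1} show that rotors can send the anticommuting pair $(\lambda e_1\lambda^\dagger,\lambda e_{n+1}\lambda^\dagger)$ back to $(e_1,e_{n+1})$, so $\lambda$ may be assumed to commute with $e_1$ and $e_{n+1}$.
\item Then $\lambda=\lambda^++i_G\lambda^-$ with $i_G=e_1e_{n+1}$ central for $\mathcal{G}_{n-1}$, and the substitution $i_G\mapsto i$ gives $\tilde\lambda\in\mathcal{C}_{n-1}$.
\item The rotor word for $\tilde\lambda$ is lifted back by multiplying suitable generators by $-i\,i_G$.
\end{itemize}

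Your main argument is much shorter and gives an explicit constant-length rotor word for each gate. However, it is only as self-contained as the cited generation theorem, whose standard proof is essentially the symplectic reduction the paper carries out.

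Your fallback is the abstract version of the paper's proof. Lemma~\ref{rotor_e1} and Corollary~\ref{ueJu=e1_ueKu=en+1} are exactly the transitivity of transvections on hyperbolic pairs in $\mathbb{F}_2^{2n}$, which is the key step in the classical proof that transvections generate $Sp(2n,\mathbb{F}_2)$. The $i_G$ lifting plays the role of your kernel analysis. In that kernel step you should add one line. If $\mu$ sends each $e_j$ to $\pm e_j$, nondegeneracy of the commutation form gives $P\in\mathcal{P}_n$ with the same sign pattern. Then $P^\dagger\mu$ commutes with every $e_j$ and is therefore a scalar.

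There are also two small slips.
\begin{itemize}
\item The three exponentials in your CNOT identity multiply to $e^{-i\pi/4}\,\mathrm{CNOT}$, so the prefactor should be $e^{+i\pi/4}$. This is harmless, since the statement holds only up to phase.
\item For $S_j$ and $H_j$ on a general qubit, you can skip the sign bookkeeping exactly as you did for CNOT. Use $S_j=e^{i\pi/4}\exp(-\tfrac{i\pi}{4}Z_j)=e^{i\pi/4}\rho_{-iZ_j}$ and $H_j=X_j\,\rho_{X_jZ_j}$, where $(X_jZ_j)^2=-1$.
\end{itemize}
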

Let us start by defining some important groups, mainly $\mathcal{U}_n$, the group of unitary multivectors, and $\mathcal{R}_n$, the group generated by the rotors $\rho_b$. We also define the sets $\mathcal{P}_n^+$ (and $\mathcal{P}_n^-$) as the set of Pauli multivectors of square $1$ (and $-1$).

To establish Theorem \ref{clifford_grp}, we first require three auxiliary results describing the action of Pauli rotors on Pauli blades and their products. For the sake of readability, the detailed proofs of these intermediate results have been deferred to Appendix \ref{app1}.

\begin{Lemma}\label{rotor_b_effect}
    Let $b \in \mathcal{P}_n^-$ and $\pi \in \mathcal{P}_n$, then 
    \begin{equation}
        \rho_b\pi\rho_b^\dagger = \begin{cases}
            \pi & \text{ if } \pi b = b\pi\\
            b\pi & \text{ if } \pi b =- b\pi
        \end{cases}
    \end{equation}
    {Consequently}, 
\begin{equation}
    \mathcal{R}_n \subset \mathcal{C}_n
\end{equation}
\end{Lemma}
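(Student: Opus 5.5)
The plan is a direct computation. First I put the rotor $\rho_b$ in closed form, then I check the two commutation cases, and finally I deduce the inclusion $\mathcal{R}_n\subset\mathcal{C}_n$ from a finiteness argument.

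\emph{Step 1: closed form of $\rho_b$.} Since $b^2=-1$, the exponential series splits into even and odd powers exactly as for a complex number, which gives
\[
\rho_b=\cos\tfrac{\pi}{4}+b\sin\tfrac{\pi}{4}=\tfrac{1}{\sqrt2}(1+b).
\]
Next I need $b^\dagger$. Every element of $\mathcal{P}_n$ is $\lambda_P$ for a unitary matrix $P$. By the structural properties following Theorem~\ref{isomorphism} ($\lambda_{MN}=\lambda_M\lambda_N$, $\lambda_{M^\dagger}=\lambda_M^\dagger$, $\lambda_{Id}=1$), this gives $b b^\dagger=1$. Hence $b^\dagger=b^{-1}=-b$, using $b^2=-1$. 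It follows that $\rho_b^\dagger=\tfrac{1}{\sqrt2}(1-b)$ and
\[
\rho_b\rho_b^\dagger=\tfrac12(1-b^2)=1,
\]
so $\rho_b$ is a unitary multivector.

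\emph{Step 2: the two cases.} Expanding the conjugate gives
\[
\rho_b\pi\rho_b^\dagger=\tfrac12\left(\pi+b\pi-\pi b-b\pi b\right).
\]
If $\pi b=b\pi$, the middle terms cancel and $b\pi b=b^2\pi=-\pi$. The expression reduces to $\tfrac12(\pi+\pi)=\pi$.

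If $\pi b=-b\pi$, then $-\pi b=b\pi$ and $b\pi b=-b^2\pi=\pi$. The expression becomes $\tfrac12(\pi+2b\pi-\pi)=b\pi$.

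\emph{Step 3: $\mathcal{R}_n\subset\mathcal{C}_n$.} Since $\mathcal{P}_n$ is a group, $b\pi\in\mathcal{P}_n$ in both cases, so $\rho_b\mathcal{P}_n\rho_b^\dagger\subset\mathcal{P}_n$. Conjugation by the unitary $\rho_b$ is injective, because it is inverted by conjugation by $\rho_b^\dagger$. As $\mathcal{P}_n$ is finite, this inclusion is therefore an equality, and so $\rho_b\in\mathcal{C}_n$. Finally, $\mathcal{C}_n$ is closed under products: if $\lambda,\mu\in\mathcal{C}_n$, then $(\lambda\mu)(\lambda\mu)^\dagger=\lambda\mu\mu^\dagger\lambda^\dagger=1$ and $(\lambda\mu)\mathcal{P}_n(\lambda\mu)^\dagger=\lambda(\mu\mathcal{P}_n\mu^\dagger)\lambda^\dagger=\mathcal{P}_n$. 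Hence the group generated by the $\rho_b$ lies in $\mathcal{C}_n$. Inverses pose no problem: $\rho_b^{-1}=\rho_b^\dagger=\rho_{-b}$, and $-b\in\mathcal{P}_n^-$, so inverses are again generators.

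\emph{Main obstacle.} The only delicate point is justifying $b^\dagger=-b$. This requires knowing that Pauli multivectors are unitary with respect to the reversion-based adjoint, which is exactly what the isomorphism of Theorem~\ref{isomorphism} supplies. Everything else is elementary algebra.
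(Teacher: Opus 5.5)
Your proof is correct and follows the paper's own argument: you use the closed form $\rho_b=(1+b)/\sqrt2$ and the same two-case simplification of $\tfrac12(\pi+b\pi-\pi b-b\pi b)$. You go slightly further than the paper by justifying $b^\dagger=-b$ and spelling out the finiteness and closure argument for $\mathcal{R}_n\subset\mathcal{C}_n$, which the paper leaves implicit; you should also state explicitly that any two Pauli elements either commute or anticommute (Proposition~\ref{prop_commutation}), so that every $\pi\in\mathcal{P}_n$ falls into one of your two cases.
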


\begin{proof}
Since $\rho_b=(1+b)/\sqrt2$, the conjugation formula reduces to two cases according to whether $b$ and $\pi$ commute or anticommute. In both situations, the image remains a Pauli blade. The detailed calculation is provided in Appendix \ref{rotor_b_effect_proof}.
\end{proof}

Lemma \ref{rotor_b_effect} shows that Pauli rotors act on Pauli blades by either fixing them or mapping
them to another Pauli blade.
In particular, this proves that the group generated by these rotors is contained in the Clifford group.
We now investigate in the converse direction, namely whether rotors can be used to reduce arbitrary Pauli elements to \mbox{canonical representatives}.

\begin{Lemma}\label{rotor_e1}
    Let $b \in\mathcal{P}_n^+$ such that $b \neq \pm 1$, then
    \begin{equation}
        \exists u \in \mathcal{R}_n : 
            ubu^\dagger=e_1
    \end{equation}
\end{Lemma}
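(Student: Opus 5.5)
The plan is to reduce $b$ step by step to $e_1$. Each step conjugates by a single rotor $\rho_c$ with $c\in\mathcal{P}_n^-$, and Lemma \ref{rotor_b_effect} describes its effect: $\pi\mapsto c\pi$ whenever $c$ anticommutes with $\pi$. Proposition \ref{prop_commutation} decides commutation. Two facts will be used repeatedly.
\begin{itemize}
\item Conjugation by a unitary preserves squares, so every intermediate element stays in $\mathcal{P}_n^+$.
\item For any blade $e_K$ there is a phase $i^m$ with $(i^me_K)^2=-1$, since $e_K^2=\pm1$. So any blade can be turned into an admissible rotor generator without changing commutation relations.
\end{itemize}

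Write $b=i^ke_J$ with $|J|=r$. Since $b\neq\pm1$ and $b^2=1$, we have $J\neq\emptyset$; otherwise $b=i^k$ with $i^{2k}=1$, i.e.\ $b=\pm1$.

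\textbf{Step 1: reduce to a vector.}
\begin{itemize}
\item If $r\ge2$ is even, pick $j\in J$ and $c=i^me_{K}$ with $K=J\setminus\{j\}$ and $c^2=-1$. Here $|K\cap J|=r-1$ is odd while $|K||J|=(r-1)r$ is even, so $c$ and $b$ anticommute. Lemma \ref{rotor_b_effect} then gives $\rho_cb\rho_c^\dagger=cb$, a phase times $e_K e_J$, which is a phase times $e_j$. Since the square is still $1$ and $e_j^2=1$, the phase is $\pm1$, so we obtain $\pm e_j$.
\item If $r\ge3$ is odd, then $r<2n$, so there is $a\notin J$. Take $c=i^me_a$ with $c^2=-1$. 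Now $|\{a\}\cap J|=0$ while $|J|$ is odd, so they anticommute. Conjugation sends $b$ to a phase times $e_{J\cup\{a\}}$, of even grade $r+1$, and we are back in the even case.
\item If $r=1$, then $b$ is already $\pm e_j$.
\end{itemize}

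\textbf{Step 2: move $\pm e_j$ to $\pm e_1$.} For $j\neq1$ take $c=e_1e_j$, which satisfies $c^2=-1$ and anticommutes with $e_j$. Then $\rho_c(\pm e_j)\rho_c^\dagger=\pm e_1e_je_j=\pm e_1$.

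\textbf{Step 3: fix the sign.} If we have landed on $-e_1$, apply $\rho_{e_1e_2}$ twice. Its first application sends $-e_1\mapsto -e_1e_2e_1=e_2$, and its second sends $e_2\mapsto e_1e_2e_2=e_1$. Composing all the rotors used gives $u\in\mathcal{R}_n$ with $ubu^\dagger=e_1$.

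The only delicate point is the parity bookkeeping in Step 1. The odd-grade case cannot be reduced directly, because removing one or two indices from $J$ yields a commuting blade. Hence the detour of raising the grade by one using an index outside $J$. Phase bookkeeping is harmless because squares are invariant under conjugation.
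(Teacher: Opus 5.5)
Your proof is correct and follows essentially the same route as the paper. For even grade you conjugate by a rotor built on $J\setminus\{j\}$ to reach a basis vector. For odd grade you first raise the grade with a rotor $\pm ie_a$, $a\notin J$. Finally you rotate $e_j$ onto $e_1$ with $\rho_{e_1e_j}$. The only difference is cosmetic: the paper takes the generator to be exactly $e_{j_1}b\in\mathcal{P}_n^-$, which gives $\rho_{b_1}b\rho_{b_1}^\dagger=e_{j_1}b^2=e_{j_1}$ with the correct sign, so it never needs your sign-fixing Step 3. It also handles $|J|=1$ through the odd case rather than separately.
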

\begin{proof}
The proof relies on constructing Pauli rotors adapted to the support of the blade \(b=i^\varepsilon e_J\). In the even-grade case, these rotors reduce \(b\) to a basis vector and then map it to \(e_1\). In the odd-grade case, an auxiliary vector outside the support of \(b\) is first introduced to reduce the problem to the even case. The complete construction is given in \mbox{Appendix \ref{rotor_e1_proof}.}
\end{proof}    

The previous lemma establishes that any Pauli multivector of square $1$ can be conjugated to a fixed basis vector.
This observation naturally extends to pairs of anticommuting Pauli elements. The following corollary formalizes this reduction.

\begin{Corollary}\label{ueJu=e1_ueKu=en+1}
    Let $b,c\in\mathcal{P}_n^+$ such that $b$ and $c$ anticommute, then 
    \begin{align}
        \exists u \in \mathcal{R}_n:ubu^\dagger =e_1 \quad and\quad ucu^\dagger=e_{n+1}
    \end{align}
\end{Corollary}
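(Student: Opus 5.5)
First, I would use Lemma \ref{rotor_e1} to send $b$ to $e_1$: choose $v\in\mathcal{R}_n$ with $vbv^\dagger=e_1$. Conjugation by $v$ is an algebra automorphism. It maps Pauli elements to Pauli elements because $\mathcal{R}_n\subset\mathcal{C}_n$ by Lemma \ref{rotor_b_effect}, and it preserves squares and (anti)commutation. Hence $c':=vcv^\dagger$ lies in $\mathcal{P}_n^+$ and anticommutes with $e_1$. The problem is then reduced to the following claim: for every $c'\in\mathcal{P}_n^+$ anticommuting with $e_1$, there is $w\in\mathcal{R}_n$ with $we_1w^\dagger=e_1$ and $wc'w^\dagger=e_{n+1}$. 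Setting $u=wv$ then proves the corollary.

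To prove the claim, I would work inside the stabilizer of $e_1$. By Lemma \ref{rotor_b_effect}, every rotor $\rho_d$ with $d\in\mathcal{P}_n^-$ commuting with $e_1$ fixes $e_1$. Blades commuting with $e_1$ are, up to phase, $e_J$ with $1\notin J$ and $|J|$ even, or $e_J$ with $1\in J$ and $|J|$ odd. The natural idea is to rerun the construction of Lemma \ref{rotor_e1} (the proof in Appendix \ref{rotor_e1_proof}) on $c'$, using only such rotors and targeting $e_{n+1}$ instead of $e_1$.

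Concretely, write $c'=i^\varepsilon e_K$. By Proposition \ref{prop_commutation}, anticommutation with $e_1$ means $|K|-|K\cap\{1\}|$ is odd. This gives two cases. If $1\notin K$, then $|K|$ is odd. If $1\in K$, then $|K|$ is even, and $c'=e_1\cdot(\pm i^\varepsilon e_{K\setminus\{1\}})$ with $|K\setminus\{1\}|$ odd. In the second case, I would first apply a rotor $\rho_d$ with $d=\alpha e_1 e_{K\setminus\{1\}}$ (phase $\alpha$ chosen so that $d^2=-1$). This $d$ has odd grade containing index $1$, so it commutes with $e_1$. It anticommutes with $c'$, so Lemma \ref{rotor_b_effect} maps $c'$ to $dc'$, which is proportional to $e_{K\setminus\{1\}}$ and has support avoiding $1$. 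This reduces to the first case.

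In the first case, $c'$ is, up to phase, an odd blade supported in $\{2,\dots,2n\}$. Here I would apply the even-case reduction of Lemma \ref{rotor_e1}: use bivector-type rotors $\rho_{e_je_k}$ and products with vectors $e_j$, $j,k\neq 1$, all of which commute with $e_1$. These peel off pairs of indices until a single vector $\pm e_m$ with $m\neq 1$ remains. Then $\rho_{e_me_{n+1}}$ sends $e_m$ to $\pm e_{n+1}$, and if needed a further rotor such as $\rho_{e_{n+1}e_j}$ applied twice (with $j\neq 1,n+1$) fixes the sign. Neither rotor involves $e_1$, so both fix it.

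The main obstacle is checking that each reduction step in Appendix \ref{rotor_e1_proof} can be performed with rotors avoiding index $1$ (or of odd grade containing it). In particular, any auxiliary vector used in the odd-grade case must be chosen outside $K\cup\{1\}$. When $n=1$ there is little room, so that case would be verified by hand: $c'\in\{\pm e_2\}$, and $\rho_{e_1e_2}^2$ fixes the sign. Sign and phase bookkeeping so that the endpoint is exactly $e_{n+1}$, not $-e_{n+1}$, is the remaining routine part.
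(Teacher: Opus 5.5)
Your reduction to $b=e_1$ is correct (it needs $b\neq\pm1$, which holds because $b$ anticommutes with $c$). The construction of $w$, however, is the whole content of the corollary, and it does not go through as proposed.

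In the case $1\in K$ your rotor is wrong. Since $1=\min K$, you have $\alpha e_1e_{K\setminus\{1\}}=\alpha e_K$, which has even grade $|K|$ and is proportional to $c'$. So $\rho_d$ fixes $c'$ and moves $e_1$. The right choice is $d=ie_1$: it commutes with $e_1$, anticommutes with $c'$, and gives $dc'=i^{\varepsilon+1}e_{K\setminus\{1\}}$.

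The genuine gap is the case $1\notin K$ with $|K|$ odd.
\begin{itemize}
\item The ``peeling'' in Lemma~\ref{rotor_e1} is an even-grade device. For odd $c'$, $e_jc'$ with $j\in K$ commutes with $c'$, and a bivector $e_je_k$ with $j,k\neq1$ either fixes $c'$ or just swaps one index.
\item What does work is $\rho_{e_mc'}$ with $m\notin K\cup\{1\}$, which sends $c'$ to $e_m$ in one step. But such an $m$ need not exist.
\item Take $n\geq2$ and $K=\{2,\dots,2n\}$, e.g.\ $c'=ie_2e_3e_4$ for $n=2$. Any $d=\alpha e_J$ anticommuting with $c'$ has $|J\setminus K|$ odd, hence $1\in J$. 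So no rotor avoiding index $1$ can start the reduction.
\item Your only index-$1$ move (the corrected $\rho_{ie_1}$) just shuttles between $\{2,\dots,2n\}$ and $\{1,\dots,2n\}$, and your other case sends it straight back.
\end{itemize}
You flag exactly this as the main obstacle but settle it only for $n=1$. Even there the proposed fix fails: $\rho_{e_1e_2}^2=e_1e_2$ sends $e_1\mapsto-e_1$, so it is not admissible, whereas $\rho_{ie_1}^2=ie_1$ is.

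The missing idea, which the paper uses, is to take the target $e_{n+1}$ itself as the auxiliary vector.
\begin{itemize}
\item First move into the even case $1\in K$ with $\rho_{ie_1}$.
\item If $n+1\in K$, then $c_1=e_{n+1}c\in\mathcal{P}_n^-$ commutes with $e_1$, anticommutes with $c$, and satisfies $c_1c=e_{n+1}$.
\item If $n+1\notin K$, then $c_1=ie_{n+1}e_1c$ sends $c$ to $ie_{n+1}e_1$, and $\rho_{ie_1}$ then sends this to $e_{n+1}$.
\end{itemize}
This route never needs an index outside $K\cup\{1\}$.
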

\begin{proof}
    By Lemma~\ref{rotor_e1}, we may first assume that $b=e_1$. The proof then consists of constructing Pauli rotors that leave $e_1$ invariant while progressively transforming $c$ into $e_{n+1}$. Depending on the parity of the support of $c$, an auxiliary vector may be introduced to reduce the problem to the even case. The complete construction is given in \mbox{Appendix \ref{ueJu=e1_ueKu=en+1_proof}}.
\end{proof}
\begin{Proposition}\label{prop1}
    Let $b_1,\dots,b_r \in \mathcal{P}_n$ and $\varepsilon_1,\dots,\varepsilon_r \in\{0,1\}$, consider $b = b_1^{\varepsilon_1}\cdots b_r^{\varepsilon_r}$, then
    \begin{equation}
        1_{\{b \in \mathcal{G}_n^+\}} = \sum_{j=1}^r \varepsilon_j1_{\{b_j \in \mathcal{G}_n^+\}}\mod 2
    \end{equation}
\end{Proposition}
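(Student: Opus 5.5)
The plan is to reduce the statement to the $\mathbb{Z}/2$-grading of $\mathcal{G}_n$ given by \eqref{odd_even_mv}, and then to express that grading through the indicator appearing in Proposition~\ref{prop1}.

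First, by the characterization of the Pauli group, each $b_j$ has the form $b_j=i^{k_j}e_{J_j}$. The scalar phase $i^{k_j}$ has grade $0$, so it does not change the parity, and $b_j$ lies entirely in $\mathcal{G}_n^+$ or entirely in $\mathcal{G}_n^-$ according to the parity of $|J_j|$. The same holds for the product $b$, which is again a phase times a single blade. This reduction matters because it makes the indicators well defined: a general multivector lies in neither homogeneous part, but a Pauli element always lies in exactly one.

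Next, define the parity label $\chi(\pi)\in\mathbb{Z}/2$ for $\pi\in\mathcal{P}_n$ by the homogeneous component containing $\pi$. I would show by induction on the number of indices with $\varepsilon_j=1$ that $\chi$ turns products into sums. The base case $r=0$ gives $b=1$. The inductive step applies \eqref{odd_even_mv} to $b=(b_1^{\varepsilon_1}\cdots b_{r-1}^{\varepsilon_{r-1}})\,b_r^{\varepsilon_r}$. Concretely, $e_Je_K=\pm e_{J\triangle K}$ and $|J\triangle K|\equiv|J|+|K| \pmod 2$, so $\chi(b)=\sum_j\varepsilon_j\chi(b_j)$.

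The main obstacle is the last step: rewriting this additive relation in terms of $1_{\{\cdot\in\mathcal{G}_n^+\}}$ exactly as the statement is worded. The relation is additive in $\mathbb{Z}/2$ only for the label whose neutral value is carried by $1=e_\emptyset$, the grade-$0$ element. That label is the indicator of the part containing products of pairs of odd blades, namely $1_{\{\cdot\in\mathcal{G}_n^-\}}$. If $\mathcal{G}_n^+$ is read literally as the even part, the identity as printed already fails at $r=0$: the left side is $1$ and the right side is $0$. It also fails for two even factors, where the left side is $1$ and the right side is $0$.

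So in this final step I would check carefully which convention the later proofs actually need, for example when counting rotors in the proof of Theorem~\ref{clifford_grp}. I would then state the identity as $1_{\{b\in\mathcal{G}_n^-\}}=\sum_j\varepsilon_j 1_{\{b_j\in\mathcal{G}_n^-\}}\bmod 2$. This is equivalent to the stated formula exactly when the indicator in the statement is the odd-part indicator, which is the only reading under which the statement holds. I would not try to prove the literal even-part version, since it is false.
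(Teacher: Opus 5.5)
Your argument is correct and is the paper's own: the one-line appeal to the grading rule \eqref{odd_even_mv}, made rigorous by your observation that every Pauli element is a phase times a single blade and so lies in exactly one of $\mathcal{G}_n^{\pm}$. You are also right that the printed identity fails exactly when $\sum_j\varepsilon_j$ is even (e.g.\ $\varepsilon=0$, $b=1$) and must be read with $1_{\{\cdot\in\mathcal{G}_n^-\}}$ on both sides; this corrected form is what the proof of Theorem~\ref{clifford_grp} actually uses, since the conditions defining $\lambda^{\pm}$ and the exponent in $\prod_k\exp\bigl(\frac{\pi}{4}(-ie_1e_{n+1})^{1_{\{b_k\in\mathcal{G}_n^-\}}}b_k\bigr)$ need the odd-part indicator too (only your verbal gloss of that label is misworded, because products of two odd blades are even, but the formula you write is right).
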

\begin{proof}
    Immediate result of \eqref{odd_even_mv}.
\end{proof}

We are now ready to prove Theorem \ref{clifford_grp}.
The strategy is to use induction on the number of qubits.
The base case illustrates the structure of the Clifford group in the simplest nontrivial setting, while the induction step shows how Clifford operators can be decomposed into rotors and lower-dimensional Clifford transformations.
\begin{proof}[Proof of Theorem \ref{clifford_grp}]
    We shall use a proof by induction.

    \textbf{{Case} $n=1$:}
    Let $\lambda \in \mathcal{C}_1$, from Corollary \ref{ueJu=e1_ueKu=en+1}, we know that there exists $u \in \mathcal{R}_1$ such that $u(\lambda e_1\lambda^\dagger)u^\dagger =e_1$ and $u(\lambda e_2\lambda^\dagger)u^\dagger =e_2$. Let $\mu = u\lambda \in \mathcal{U}_1$, we have $\mu e_j\mu^\dagger = e_j$, so $\mu e_j = e_j\mu$ for $j=1,2$. Because $\mu$ commutes with $e_1$ and $e_2$, it commutes with every element in $\mathcal{G}_1$, however the only linear operator that commutes with every other linear operator is a scalar operator, since $\mu$ is also unitary, then $\mu =e^{i\theta}$, which means that $\lambda =u^\dagger \in \mathcal{R}_1$ up to a \mbox{global phase.}

    Having established the result for a single qubit, we now assume that the theorem holds for $(n-1)$ qubits and prove it for $n$-qubits.
    
    \textbf{{Case} $n>1$:}
    Let $\mathcal{G}_{n-1}$ be the Geometric Algebra, defined in Section \ref{complexification} associated with the $2(n-1)$ dimensional space $\mathbb{C}^{2(n-1)}=\text{Span}_\mathbb{C}\{e_2,\dots,e_n,e_{n+2},\dots,e_{2n}\}$. By the same notations we define $\mathcal{P}_{n-1}$, $\mathcal{C}_{n-1}$ and $\mathcal{R}_{n-1}$, the induction hypothesis is in that \mbox{Geometric Algebra.}

    Let $\lambda \in \mathcal{C}_n$, using corollary \ref{ueJu=e1_ueKu=en+1} we can safely assume that $\lambda e_1\lambda^\dagger =e_1$ and \linebreak  $\lambda e_{n+1} \lambda^\dagger = e_{n+1}$. This means that $\lambda $ commutes with both $e_1$ and $e_{n+1}$, we write 
    \begin{align}
        \lambda = \sum_{J \subset \{1,\dots,2n\}}\lambda_Je_J
    \end{align}
    {Since} $(e_J)_J$ is the canonical basis for $\mathcal{G}_n$, all the terms $e_J$ that anticommute with either $e_1$ or $e_{n+1}$ should not be in the sum, therefore $\lambda_J=0$, we know from Proposition \ref{prop_commutation} that the only $e_J$ that commute with both vectors are such that
    \begin{align}
        |J| = 1_{\{1 \in J\}} \mod2\quad  \text{ and }\quad |J| = 1_{\{n+1 \in J\}} \mod2
    \end{align}
    {This} means that we can write $\lambda$ as
    \begin{align}
        \lambda =& \sum_{J \subset \{2,\dots,n,n+2,\dots,2n\};|J|\text{ even}}\lambda_Je_J+\sum_{J \subset \{2,\dots,n,n+2,\dots,2n\};|J|\text{ odd}}\lambda_Je_{J\cup \{1,n+1\}}
        \end{align}
        \begin{align}\label{eq_lambda_proof_4}
        \lambda=\lambda^+ + e_1e_{n+1}\lambda^-
        \end{align}
    where $\lambda^+ \in \mathcal{G}_{n-1}^+$ and $\lambda^- \in \mathcal{G}_{n-1}^-$. A fact that we will use a lot is that the bivector $e_1e_{n+1}$ commutes with elements of $\mathcal{G}_{n-1}$ since $e_1$ and $e_{n+1}$ either both commute or anticommute with $e_K \in \mathcal{G}_{n-1}$, depending on $|K|$. From the point of view of $\mathcal{G}_{n-1}$, $e_1e_{n+1}$ acts as the imaginary number $i$: it commutes with every element in the algebra, it verifies \linebreak  $(e_1e_{n+1})^2=-1$ and consequently $(e_1e_{n+1})^\dagger = -e_1e_{n+1}$, therefore we shall denote it by $i_G$, the geometric complex number: $i_G =e_1e_{n+1}$. We also write 
    $$\lambda = \lambda ^+ + i_G \lambda^-$$

    Our first instinct is to say that $\mathcal{P}_{n-1}$ is stable under the conjugation action of $\lambda$, but it turns out that $\lambda$ verifies a stability relation a bit more general than that; we will verify that 
    \begin{equation}
        \forall b \in \mathcal{P}_{n-1} : \lambda b\lambda^\dagger =i_G^\varepsilon c \quad c \in \mathcal{P}_{n-1}, \varepsilon\in\{0,1\}
    \end{equation}
    {In other} words, $\lambda$ stabilizes not the $n-1$-qubit Pauli group, but the $n-1$-qubit Pauli group augmented with the geometric complex number $i_G$.

    A necessary and sufficient condition for that to be true is 
    \begin{align}\label{NS condition}
    \forall j\in \{2,..,n,n+2,...,2n\}: \lambda e_j \lambda^\dagger =i_G^\varepsilon c \quad c \in \mathcal{P}_{n-1}, \varepsilon\in\{0,1\}
        \end{align}
        {Let} $ j \in \{2,...,n,n+2,...,2n\}$. Since the vectors $e_j, j \neq 1,n+1$ anticommute with both $e_1$ and $e_{n+1}$, $\lambda e_j \lambda^\dagger$ anticommutes also with both $\lambda e_1 \lambda^\dagger = e_1$ and $\lambda e_{n+1}\lambda^\dagger =e_{n+1}$, therefore by the same argument made to justify Equation \eqref{eq_lambda_proof_4}, $\lambda e_j\lambda ^\dagger = i^ae_K$ where either both $1$ and $n+1$ are in $K$ or neither. In other words, we have exactly $e_K = \pm i_G^\varepsilon e_{K'}$ where $K' \subset \{2,...,n,n+2,...,2n\}$, thus obtaining the Condition \eqref{NS condition}.

    Recall that $\lambda = \lambda ^+ + i_G \lambda^-$. Let us consider $\tilde{\lambda} \in \mathcal{G}_{n-1}$ defined as 
    \begin{equation}
        \tilde{\lambda} = \lambda^+ + i \lambda^-
    \end{equation}
    {We} prove that $\tilde{\lambda} \in \mathcal{C}_{n-1}$.
    First we have 
    \begin{align}
    \lambda \lambda^\dagger = (\lambda^+ +i_G\lambda^-)(\lambda^{+\dagger} -i_G \lambda^{-\dagger}) &= (\lambda^+\lambda^{+\dagger} + \lambda^- \lambda^{-\dagger}) + i_G(\lambda^-\lambda^{+\dagger} - \lambda^{+}\lambda^{-\dagger})=1
    \end{align}
    {So} 
    \begin{align}
        \lambda^+\lambda^{+\dagger} + \lambda^- \lambda^{-\dagger}=1 \quad;\quad  \lambda^-\lambda^{+\dagger} - \lambda^{+}\lambda^{-\dagger}=0
    \end{align}
    {Then}
    \begin{align}
    \tilde{\lambda} \tilde{\lambda}^\dagger = (\lambda^+ +i\lambda^-)(\lambda^{+\dagger} -i \lambda^{-\dagger}) &= (\lambda^+\lambda^{+\dagger} + \lambda^- \lambda^{-\dagger}) + i(\lambda^-\lambda^{+\dagger} - \lambda^{+}\lambda^{-\dagger})=1
    \end{align}
    {Thus} $\tilde{\lambda} \in \mathcal{U}_{n-1}$.
    
    Moreover, we know from previously that for $j\neq1,n+1$ $\lambda e_j \lambda^\dagger = (i_G)^{\varepsilon}b$ where $b\in \mathcal{P}_{n-1}$, on the other hand we have 
    \begin{align}
        \lambda e_j &=\lambda^+e_j + i_G \lambda^-e_j\\
        (i_G)^\varepsilon b\lambda &= (i_G)^\varepsilon b \lambda^+ +(i_G)^{\varepsilon+1}b\lambda^-
    \end{align}
    {So} depending on the value of $\varepsilon \in \{0,1\}$ we get 
    \begin{align}
     \begin{cases}
         \lambda^+e_j =b \lambda^+\\
         \lambda^-e_j=b\lambda^-
     \end{cases}
     \quad \text{or} \quad \begin{cases}
         \lambda^+e_j =-b \lambda^-\\
         \lambda^-e_j=b\lambda^+
     \end{cases}
    \end{align}
    {Which} is equivalent, same as before, to 
    \begin{align}
        \tilde{\lambda} e_j \tilde{\lambda}^\dagger = i^{\varepsilon}b
    \end{align}
    {Thus} allowing us to conclude that $\tilde{\lambda}\in\mathcal{C}_{n-1}$. We use the induction hypothesis to write
    \begin{equation}
        \tilde\lambda = \prod_{k=1}^r \exp(\frac{\pi}{4}b_k) \quad b_k \in \mathcal{P}_{n-1} \subset \mathcal{P}_n
    \end{equation}
    {Expanding} this product gives
    \begin{align}
        \tilde\lambda= \sum_{\varepsilon_1,\dots,\varepsilon_r \in \{0,1\} }\frac{1}{\sqrt2^r}b_1^{\varepsilon_1}\cdots b_r^{\varepsilon_r}
    \end{align}
    {Using} Proposition \ref{prop1}, we get 
    \begin{align}
        \lambda^+ = \sum_{\substack{\varepsilon_1,\dots,\varepsilon_r \in \{0,1\} \\\sum_{j=1}^r \varepsilon_j1_{\{b_j \in \mathcal{G}_n^+\}}=0\mod 2}}\frac{1}{\sqrt2^r}b_1^{\varepsilon_1}\cdots b_r^{\varepsilon_r}\\
        \lambda^- = \sum_{\substack{\varepsilon_1,\dots,\varepsilon_r \in \{0,1\} \\\sum_{j=1}^r \varepsilon_j1_{\{b_j \in \mathcal{G}_n^+\}}=1\mod 2}}\frac{-i}{\sqrt2^r}b_1^{\varepsilon_1}\cdots b_r^{\varepsilon_r}\\\nonumber
    \end{align}
    {So} we rewrite $\lambda$ as \vspace{-3pt}
    \begin{align}\nonumber
        \lambda=& \sum_{\substack{\varepsilon_1,\dots,\varepsilon_r \in \{0,1\} \\\sum_{j=1}^r \varepsilon_j1_{\{b_j \in \mathcal{G}_n^+\}}=0\mod 2}}\frac{1}{\sqrt2^r}b_1^{\varepsilon_1}\cdots b_r^{\varepsilon_r} + \sum_{\substack{\varepsilon_1,\dots,\varepsilon_r \in \{0,1\} \\\sum_{j=1}^r \varepsilon_j1_{\{b_j \in \mathcal{G}_n^+\}}=1\mod 2}}i_G\frac{-i}{\sqrt2^r}b_1^{\varepsilon_1}\cdots b_r^{\varepsilon_r} \\\nonumber
        =&\sum_{\substack{\varepsilon_1,\dots,\varepsilon_r \in \{0,1\} \\\sum_{j=1}^r \varepsilon_j1_{\{b_j \in \mathcal{G}_n^+\}}=0\mod 2}}\frac{(-i\cdot i_G)^{\sum_{j=1}^r \varepsilon_j1_{\{b_j \in \mathcal{G}_n^+\}}}}{\sqrt2^r}b_1^{\varepsilon_1}\cdots b_r^{\varepsilon_r}\quad + \\&\qquad \sum_{\substack{\varepsilon_1,\dots,\varepsilon_r \in \{0,1\} \\\sum_{j=1}^r \varepsilon_j1_{\{b_j \in \mathcal{G}_n^+\}}=1\mod 2}}\frac{(-i\cdot i_G)^{\sum_{j=1}^r \varepsilon_j1_{\{b_j \in \mathcal{G}_n^+\}}}}{\sqrt2^r}b_1^{\varepsilon_1}\cdots b_r^{\varepsilon_r}\\\nonumber
        =& \sum_{\varepsilon_1,\dots,\varepsilon_r \in \{0,1\}} \frac{1}{\sqrt{2}^r}\prod_{k=1}^{r}\left((-i\cdot i_G)^{1_{\{b_k \in \mathcal{G}_n^+\}}}b_k\right)^{\varepsilon_k}\\\nonumber
        =&\prod_{k=1}^r\exp\left(\frac{\pi}{4}(-ie_1e_{n+1})^{1_{\{b_k \in \mathcal{G}_n^+\}}}b_k\right) \in \mathcal{R}_n \nonumber
    \end{align}
    {Which} completes the proof.
\end{proof}

Theorem \ref{clifford_grp} shows that every Clifford operator can be expressed as a product of Pauli rotors.
However, this construction does not provide information on the uniqueness or structure of such decompositions.
The following proposition refines this result by constraining the form of the Pauli elements appearing in such products.

\begin{Proposition}\label{prop_decomp}
    Let $\lambda \in \mathcal{C}_n$, then there exists $b_1,\dots,b_r \in \mathcal{P}_n^-$ with $b_j \neq \pm b_k$ and $b\in\mathcal{P}_n$ \linebreak  such that 
    \begin{equation}\label{equation112}
        \lambda = \left(\prod_{j=1}^r\rho_{b_j} \right)b
    \end{equation}
\end{Proposition}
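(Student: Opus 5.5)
Starting from Theorem \ref{clifford_grp}, I write $\lambda = e^{i\theta}\rho_{c_1}\cdots\rho_{c_m}$ with $c_k \in \mathcal{P}_n^-$. I then reduce this word to one with pairwise distinct rotors up to sign, collecting any leftover factors as a single Pauli element at the right end.

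The first step is a set of rewriting rules for products of rotors. Using $\rho_c=(1+c)/\sqrt2$ and $c^2=-1$:
\begin{itemize}
\item $\rho_c^2 = c$;
\item $\rho_c\rho_{-c} = 1$;
\item $\rho_{-c} = \rho_c^\dagger = \rho_c\,(-c)$, since $\rho_c c = (c-1)/\sqrt2 = -\rho_c^\dagger$.
\end{itemize}
So every occurrence of $\rho_{\pm c}$ can be replaced by $\rho_c$ times a Pauli factor. The second ingredient is a rule for pushing a Pauli element $\pi\in\mathcal{P}_n$ to the right past a rotor. By Lemma \ref{rotor_b_effect}, if $\pi$ commutes with $c$ then $\pi\rho_c=\rho_c\pi$. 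If $\pi$ anticommutes with $c$, then $\pi\rho_c=\rho_c^\dagger\pi=\rho_{-c}\pi$. In either case the rotor index stays in $\{\pm c\}$, so the set of rotor indices up to sign is preserved while Pauli factors migrate to the right.

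The proof is then an induction on the word length $m$. I maintain the form $\lambda = e^{i\theta}\bigl(\prod_j \rho_{b_j}\bigr) b$ and choose a canonical representative $b$ of each class $\{\pm b\}$ present. Each rotor $\rho_{-b}$ is converted into $\rho_b(-b)$, and the Pauli factor is pushed right, where it may flip the signs of later rotors. Those later rotors are then converted in turn. Since the process moves left to right, it terminates with all rotor indices canonical.

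Next, repeated indices must be merged. If the same $b$ appears twice, the occurrences are in general not adjacent. The plan is to commute $\rho_b$ rightward past the intermediate rotors $\rho_{c}$ until it reaches the other copy, using Lemma \ref{rotor_b_effect} in the form $\rho_b\rho_c\rho_b^\dagger=\rho_{\rho_b c\rho_b^\dagger}$. This changes the index $c$ into $c$ or $bc$, and $bc$ again squares to $-1$ when $b$ and $c$ anticommute. Once the two copies are adjacent, $\rho_b\rho_b=b$ is a Pauli element, which I push to the right end. Each merge lowers the number of rotors by two, so the procedure terminates with distinct indices up to sign.

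The main obstacle is the global phase. The theorem gives $\lambda$ only up to $e^{i\theta}$, while the proposition has no phase. I would argue that the phase can be absorbed into $b$. Since $\prod_j\rho_{b_j}\,b$ has its scalar part and all coefficients in $\mathbb{Z}[1/\sqrt2,i]$ times Pauli elements, $e^{i\theta}$ should be forced to be a power of $i$ or of $e^{i\pi/4}$. Powers of $i$ go into $b$. For $e^{i\pi/4}$, I would use the $n=1$ relation $\lambda_S=e^{i\pi/4}\rho_{e_1e_2}$ to express it through an extra rotor such as $\rho_{i e_1}$ or $\rho_{ie_{J}}$ with an index not already used. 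I expect this phase bookkeeping, together with the index-distinctness when conjugated indices collide, to need the most care. Alternatively, the statement may be read as holding up to the same global phase as Theorem \ref{clifford_grp}, in which case no phase argument is needed.
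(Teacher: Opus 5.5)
Your reduction argument is correct and is essentially the paper's own proof. The paper inducts on the length of the rotor word. Once the tail has been reduced, if the leading index $b_1$ equals $\pm c_{j_1}$ for some remaining index, it moves $\rho_{b_1}$ to the right using $\rho_b\rho_a=\rho_{\rho_b a\rho_b^\dagger}\rho_b$. It then collapses $\rho_{b_1}\rho_{c_{j_1}}\in\{1,c_{j_1}\}$ into a Pauli factor and pushes that factor to the right end, which only flips signs of later indices. Finally it re-applies the hypothesis to the shorter word. Your termination measure, the rotor count dropping by two per merge, is the same. One bookkeeping point: a merge can destroy your canonical-sign invariant. Pushing $b$ right flips signs, and a conjugated index $bc$ may equal $-d$ for some $d$ already present. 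So after each merge you must either re-canonicalize, or simply merge $\pm$-pairs directly via $\rho_d\rho_{-d}=1$. This is exactly why the paper re-invokes its hypothesis after each merge.

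The phase paragraph, however, is wrong and should be replaced by your alternative reading. Nothing forces $e^{i\theta}$ to be a power of $e^{i\pi/4}$. By the paper's definition every scalar $e^{i\theta}$ belongs to $\mathcal{C}_n$: it is unitary and fixes every Pauli blade under conjugation. By contrast, the elements $\bigl(\prod_j\rho_{b_j}\bigr)b$ form a countable set, since $\mathcal{P}_n$ is finite. So for all but countably many $\theta$, the scalar $e^{i\theta}$ has no decomposition of the stated form. Your coefficient argument constrains the right-hand side, not $\lambda$, so it is circular. The proposition is therefore true only up to a global phase, which is how the paper's proof tacitly treats it: it starts directly from a product of rotors. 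Also, $\rho_{ie_J}$ with $J\neq\emptyset$ is never a scalar, so it cannot supply a phase. The relevant fact is that the scalar $i=i\,e_\emptyset$ itself lies in $\mathcal{P}_n^-$. Hence $\rho_{\pm i}=e^{\pm i\pi/4}$ is already a generator, and phases $e^{ik\pi/4}$ are absorbed by the same reduction with no extra argument.
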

\begin{proof}
    As always, we prove by induction the following statement
    \begin{align}
        \mathcal{H}_r :\text{" }&\forall b_1,\dots b_r \in\mathcal{P}_n^-,\exists s\leq r, c_1,\dots,c_s \in\mathcal{P}_n^-,c\in \mathcal{P}_n:\\
        &c_j\neq \pm c_k\\
        &\prod_{j=1}^r \rho_{b_j} = \left(\prod_{j=1}^s \rho_{c_j}\right)c \text{ "}
    \end{align}

    For $r=1$, we take $s=r$, $c_1 = b_1$ and $c=1$. 
    
    For $r \geq 1$ assume $\mathcal{H}_r$, let us prove $\mathcal{H}_{r+1}$, let $b_1,\dots,b_{r+1} \in \mathcal{P}_n^-$. From $\mathcal{H}_{r}$ we know that there exists $s\leq r, c_1,\dots,c_s \in\mathcal{P}_n^-,c\in \mathcal{P}_n |\quad c_j\neq \pm c_k$ such that
    \begin{equation}
        \prod_{j=2}^{r+1}\rho_{b_j} = \left(\prod_{j=1}^s \rho_{c_j}\right)c
    \end{equation}
    {If} $b_1 \neq \pm c_j$ for every $j$, then the proof is over; otherwise, there exists a unique $j_1$ such that $b_1 = \pm c_{j_1}$. We therefore have
    \begin{align}
        \lambda = \rho_{c'_{1}}\cdots\rho_{c'_{j_1-1}}(\rho_{b_1}\rho_{c_{j_1}})\rho_{c_{j_1+1}}\cdots \rho_{c_s}c
    \end{align}
    where $c'_k = \rho_{b_1}c_k\rho_{b_1}^{\dagger} = b_1^{\varepsilon_k}c_k$ (Since generally, we have $\rho_b\rho_a = \rho_{\rho_ba\rho_b^\dagger}\rho_b$). We know that $\rho_{b_1}\rho_{c_{j_1}}$ is a Pauli operator (either $1$ or $c_{j_1}$), and we can take it to the other side, with a possible change of sign inside the following rotors. For the other terms we get 
    \begin{align}
        \lambda = \rho_{c'_{1}}\cdots\rho_{c'_{j_1-1}}\rho_{\pm c_{j_1+1}}\cdots \rho_{\pm c_s}((\rho_{b_1}\rho_{c_{j_1}})c)
    \end{align}
    {The} product of rotors is of length $s-1 \leq r-1$, we therefore can use $\mathcal{H}_{r}$ to rewrite it, giving us the desired result.
\end{proof}

The above proposition gives us a sustainable way of constructing Clifford operators without needing to repeat the same rotor or its conjugate in the product. It also gives us an upper bound on the length of the decomposition $r$, since we have no more than $2^{2n}$ elements in $\mathcal{P}_n^-$ modulo $-1$. What it does not provide is information about the uniqueness of the decomposition; indeed, such uniqueness is not obtained in general. Take the \mbox{following example}
$$\lambda = \rho_{ie_1} \rho_{ie_2}$$
{Since}
$$ \rho_{ie_1} \rho_{ie_2}\rho_{-ie_1}= \rho_{-e_1e_2}$$
then $\lambda$ has an alternate decomposition
$$\lambda= \rho_{-e_1e_2} \rho_{ie_1}$$

\subsection{Clifford + T Universality in Geometric Algebra}
As previously mentioned in the beginning of this section, the Clifford group is not sufficient for approximating quantum gates, one needs, according to the \textit{Solovay-Kitaev Theorem}, to add a \textit{non-Clifford} gate to the collection \cite{N&C2010,Dawson and Nielsen (2005)}, typically one uses the $T$ gate, to make the set universal, this is why it is important to look for an equivalent of the $T_j$ gate in GA, the following proposition gives us exactly that.
\begin{Proposition}
    Let $T_j$ be the $T$ gate applied to the $j$-th qubit. Then 
    $$t_j=\lambda_{T_j} = \exp(\frac{\pi}{8}e_je_{n+j})$$
\end{Proposition}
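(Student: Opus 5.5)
The plan is a direct computation. First I identify the bivector $e_je_{n+j}$ with the Pauli operator $-iZ_j$. Then I exponentiate it using the fact that it squares to $-1$, and compare the result with the diagonal matrix $T_j=\mathrm{diag}$ acting as $|l\rangle\mapsto e^{i\frac{\pi}{4}l_j}|l\rangle$ through the isomorphism of Theorem~\ref{isomorphism}.

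\textbf{Step 1: identify $e_je_{n+j}$ with $-iZ_j$.} I would not work through the sign-laden formulas for $e_j\varphi_l$. Instead I would use the Witt-basis expressions $e_j=f_j+f_j^\dagger$ and $e_{n+j}=i(f_j-f_j^\dagger)$. Expanding with the Grassmann identities \eqref{grassmann_id}, $f_j^2=(f_j^\dagger)^2=0$, gives
\begin{equation*}
e_je_{n+j}=i\left(f_j^\dagger f_j-f_jf_j^\dagger\right).
\end{equation*}
Next I would evaluate $f_jf_j^\dagger$ and $f_j^\dagger f_j$ on a basis state $\varphi_l$ using \eqref{fdagger_operation} and \eqref{f_operation}. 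The sign factors $(-1)^{l_1+\cdots+l_{j-1}}$ cancel in pairs, so $f_jf_j^\dagger\varphi_l=(1-l_j)\varphi_l$ and $f_j^\dagger f_j\varphi_l=l_j\varphi_l$. These two elements are therefore the projectors onto $l_j=0$ and $l_j=1$. It follows that $f_jf_j^\dagger-f_j^\dagger f_j=\lambda_{Z_j}$ and hence $e_je_{n+j}=-i\lambda_{Z_j}$. This is consistent with the one-qubit formula $\lambda_Z=ie_1e_2$, and in particular $(e_je_{n+j})^2=-1$.

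\textbf{Step 2: exponentiate.} Since $(e_je_{n+j})^2=-1$, the exponential series splits as
\begin{equation*}
\exp\left(\tfrac{\pi}{8}e_je_{n+j}\right)=\cos\tfrac{\pi}{8}-i\sin\tfrac{\pi}{8}\,\lambda_{Z_j}.
\end{equation*}
On $\varphi_l$ this acts as multiplication by $e^{-i\pi/8}$ when $l_j=0$ and by $e^{i\pi/8}$ when $l_j=1$. By the correspondence of Theorem~\ref{isomorphism}, it is therefore the multivector of
\begin{equation*}
e^{-i\pi/8}\,\mathrm{diag}_{l}\left(e^{i\frac{\pi}{4}l_j}\right)=e^{-i\pi/8}\,T_j .
\end{equation*}

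\textbf{Step 3: the global phase (main obstacle).} The computation shows that the identity holds only up to the global phase $e^{i\pi/8}$, that is, $\lambda_{T_j}=e^{i\pi/8}\exp(\frac{\pi}{8}e_je_{n+j})$. This is the same convention the paper uses for the phase gate, where $\lambda_S=e^{i\pi/4}\exp(\frac{\pi}{4}e_1e_2)$. I would therefore read the statement as an equality up to global phase, which is the convention under which Theorem~\ref{clifford_grp} is also stated. The delicate point is the sign bookkeeping in Step~1: the orientation of the bivector, $e_je_{n+j}$ versus $e_{n+j}e_j$, decides whether one gets $T_j$ or $T_j^\dagger$. This is why I would derive it from the Witt identities \eqref{grassmann_id}--\eqref{duality_id} rather than from the signed action formulas, and cross-check against the one-qubit case $\lambda_Z=ie_1e_2$.
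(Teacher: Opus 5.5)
Your argument is correct and follows the same route as the paper: identify $\lambda_{Z_j}=f_jf_j^\dagger-f_j^\dagger f_j=ie_je_{n+j}$ and exponentiate. The paper obtains this identification from the one-qubit formula and the tensor-product rule of Theorem~\ref{tensor}, where you instead verify it directly on the basis states $\varphi_l$, and the global phase you flag is absorbed in the paper by defining $T_j=\exp(-i\frac{\pi}{8}Z_j)=e^{-i\pi/8}\,\mathrm{diag}(1,e^{i\pi/4})$ on qubit $j$, so under that convention the stated identity holds exactly rather than only up to the factor $e^{i\pi/8}$.
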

\begin{proof}
    The $T$ gate applied on the $j$-th qubit is defined as $T_j = \exp(-i\frac{\pi}{8}Z_j)$, from Section \ref{pauli} we also know that the $Z$ gate is given by the multivector $ie_1e_2 = ff^\dagger - f^\dagger f$, so
    $$\lambda_{Z_j} = 1\otimes \cdots (\lambda_Z)_j\cdots\otimes1=1\otimes \cdots (f_jf_j^\dagger - f_j^\dagger f_j)\cdots\otimes1$$
    {Notice} that we do not have terms $f_k$ or $f_k^\dagger$ in the tensor product since $1 = f_kf_k^\dagger + f_k^\dagger f_k $; therefore, Theorem \ref{tensor} shows that the previous tensor product is a mere geometric product; in other words, we get
    $$\lambda_{Z_j} = 1 \cdots (f_jf_j^\dagger - f_j^\dagger f_j)\cdots1=ie_je_{n+j}$$
    {Thus} 
    $\lambda_{T_j} =\exp\left(\frac{\pi}{8}e_je_{n+j}\right)$.
\end{proof}
The introduction of a single $\pi/8$ Pauli rotor to the Clifford group means that we can now work with a smaller angle for all rotors. The resulting group has a lower degree of discreteness as it is now generated by the $\pi/8$ Pauli rotors, as stated by the \mbox{following Theorem}.
\begin{Theorem}
    The group generated by Clifford+T is generated by the Pauli rotors of angle $\pi/8$
    \begin{equation}
        \langle \mathcal{C}_n\cup \{ t_j \quad |1 \leq j \leq n\}\rangle = \left\langle \left\{ \exp\left(\frac{\pi}{8}b\right)| \quad b\in \mathcal{P}_n^- \right\} \right\rangle
    \end{equation}
\end{Theorem}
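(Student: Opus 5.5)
We prove the two inclusions separately. Write $G_8=\left\langle \exp(\frac{\pi}{8}b)\mid b\in\mathcal{P}_n^-\right\rangle$ and $G_T=\langle \mathcal{C}_n\cup\{t_j\}\rangle$. Throughout, equalities are understood up to a global phase, as in Theorem \ref{clifford_grp}.

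For $G_T\subset G_8$, we check the generators. Each $t_j=\exp(\frac{\pi}{8}e_je_{n+j})$ lies in $G_8$ because $(e_je_{n+j})^2=-1$, so $e_je_{n+j}\in\mathcal{P}_n^-$. For the Clifford part, Theorem \ref{clifford_grp} says that $\mathcal{C}_n$ is generated up to phase by the rotors $\rho_b=\exp(\frac{\pi}{4}b)$ with $b\in\mathcal{P}_n^-$. Since $\rho_b=\exp(\frac{\pi}{8}b)^2$, each such rotor lies in $G_8$. The global phases need a remark. If the statement is read up to phase, nothing more is needed. Otherwise, scalar phases $i^k\cdot 1$ that are Pauli elements can be produced in $G_8$: for any $b\in\mathcal{P}_n^-$ we have $\exp(\frac{\pi}{2}b)^2=b^2=-1$, and $\exp(\frac{\pi}{8}ib')$-type products combine to give the phases. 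I expect to treat the statement as holding up to global phase, consistent with Theorem \ref{clifford_grp}.

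For $G_8\subset G_T$, let $b\in\mathcal{P}_n^-$. The key observation is that the conjugation of a rotor exponent is compatible with the exponential: for unitary $u$, $u\exp(\theta a)u^\dagger=\exp(\theta\, uau^\dagger)$. Since $(ib)^2=1$, $ib\in\mathcal{P}_n^+$, and $ib\neq\pm1$ because $b$ is not a scalar (a scalar Pauli element squares to $\pm1$ only as $\pm i$, and $(\pm i)^2=-1$ would make $b=\pm i$). I will exclude or handle this degenerate case separately: $\exp(\frac{\pi}{8}(\pm i))$ is a pure phase.

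For nonscalar $b$, Lemma \ref{rotor_e1} gives $u\in\mathcal{R}_n\subset\mathcal{C}_n$ with $u(ib)u^\dagger=e_1$. Note that $e_1$ is not the target we want. The cleanest approach is to map $ib$ onto $ie_1e_{n+1}=\lambda_{Z_1}$. Since $ie_1e_{n+1}\in\mathcal{P}_n^+$ and is not $\pm1$, Lemma \ref{rotor_e1} applies to it as well, giving $v\in\mathcal{R}_n$ with $v(ie_1e_{n+1})v^\dagger=e_1$. Setting $w=v^\dagger u\in\mathcal{R}_n$, we obtain $w(ib)w^\dagger=ie_1e_{n+1}$, hence $wbw^\dagger=e_1e_{n+1}$. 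Therefore
\[
\exp\left(\tfrac{\pi}{8}b\right)=w^\dagger\exp\left(\tfrac{\pi}{8}e_1e_{n+1}\right)w=w^\dagger t_1 w\in G_T .
\]

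The main obstacle is the bookkeeping of signs and scalar cases: making sure that $u$ and $v$ really conjugate elements of $\mathcal{P}_n^+$ as stated in Lemma \ref{rotor_e1}, and that the degenerate scalar elements $b=\pm i$ are dispatched as global phases. The rest is a direct application of Theorem \ref{clifford_grp} and Lemma \ref{rotor_e1}.
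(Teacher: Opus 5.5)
Your proof is correct and follows the paper's argument. For one inclusion, $\pi/4$ rotors are squares of $\pi/8$ rotors and $t_j$ is itself one; for the other, every $\exp(\frac{\pi}{8}b)$ is a Clifford conjugate of some $t_j$. Where the paper only asserts that the conjugating Clifford exists (and writes the target as $ie_je_{j+n}$, which squares to $+1$ and so cannot be conjugate to $b$), you construct it from Lemma~\ref{rotor_e1} applied to $ib$ and $ie_1e_{n+1}$, and you treat the scalar case $b=\pm i$ separately. Your up-to-global-phase reading is not just convenient but necessary: $\mathcal{C}_n$ contains every unit scalar $e^{i\theta}$, while the right-hand group is finitely generated, so your tentative remark about producing the phases inside that group cannot work and should be dropped.
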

    \begin{proof}
        We notice that for every Pauli operator 
    $b$ of square $-1$, there exists a Clifford operator $\lambda$ such that $b = \lambda ie_je_{j+n}\lambda^\dagger$ for some $j$, therefore $\exp({\frac{\pi}{8}b}) = \lambda t_j \lambda^\dagger$, therefore $\langle \mathcal{C}_n\cup \{ t_j \quad |1 \leq j \leq n\}\rangle \supseteq \left\langle \left\{ \exp\left(\frac{\pi}{8}b\right)| \quad b\in \mathcal{P}_n^- \right\} \right\rangle$.

    On the other hand, a Clifford operator is generated by Pauli rotors of angle $\pi/4$, which can be seen as the square of Pauli rotors of angle $\pi/8$, and $t_j$ is itself a $\pi/8$ Pauli rotor; hence, the other inclusion holds, giving us the equality of the two groups.
    \end{proof}

The previous Theorem provides a natural geometric interpretation of the universality of the Clifford+\(T\) gate set within the Geometric Algebra framework. Indeed, Clifford operators are generated by Pauli rotors of the form
\[
\rho_b = \exp\left(\frac{\pi}{4}b\right),
\]
whose conjugation action preserves the discrete set of Pauli blades. Geometrically, Clifford transformations therefore act as exact symmetry transformations of the discrete Pauli geometry: under conjugation, blades are mapped to other blades.

By contrast, the introduction of \(\pi/8\)-rotors fundamentally changes this structure. Writing
\[
\rho_{b}^{(\pi/8)}=\exp\left(\frac{\pi}{8}b\right),
\]
one obtains, for noncommuting Pauli blades \(b \in\mathcal{P}_n^-\) and \(c\),
\[
\rho_{b}^{(\pi/8)}c\rho_{b}^{(\pi/8)\dagger}
=
\frac{c+bc}{\sqrt2}
\]
which is no longer a single Pauli blade, but rather a linear combination of blades. Geometrically, this means that conjugation by \(\pi/8\)-rotors does not simply transport one discrete geometric direction into another, but instead continuously deforms Pauli directions into new multivector directions lying outside the original discrete Pauli structure. While Clifford conjugation preserves the finite geometry generated by Pauli blades, \(\pi/8\)-rotors generate new effective rotation axes obtained from superpositions of geometric primitives. Repeated compositions of such transformations therefore produce increasingly rich geometric directions that are no longer confined to the original discrete blade geometry.

This phenomenon is consistent with the Lie-theoretic foundations underlying universal quantum gate approximation and the Solovay--Kitaev Theorem \cite{N&C2010,Dawson and Nielsen (2005)}, where repeated compositions of noncommuting generators allow dense exploration of the unitary group. In the present framework, this mechanism admits a direct geometric interpretation: universality emerges from the continuous deformation of discrete Pauli directions into increasingly rich multivector rotation axes.

\subsection{Discussion on the Geometric Algebra Representation}

As observed in the proof of Theorem \ref{isomorphism}, the Witt basis elements $f_j$ and $f_j^\dagger$ act on computational basis states in a manner analogous to fermionic annihilation and creation operators. In particular, relations \eqref{fdagger_operation} and \eqref{f_operation} show that $f_j^\dagger$ creates excitations while \(f_j\) removes them, up to phase factors arising from the ordering conventions of the basis. Consequently, the algebraic rules satisfied by the Witt basis are essentially identical to the canonical anticommutation relations of fermionic operators.

This observation naturally raises the question of whether the results obtained in the present work could equally be formulated using the standard language of creation and annihilation operators rather than Geometric Algebra.

The answer is essentially affirmative. Theorem 1 establishes an isomorphism between linear operators acting on the quantum register and multivectors of the complex Geometric Algebra. Consequently, the matrix formalism, the fermionic creation-annihilation formalism, and the present Geometric Algebra formulation are algebraically equivalent. No additional algebraic structure or computational power is introduced, and all results presented here could in principle be reformulated within the standard second-\mbox{quantized framework.}

The interest of the Geometric Algebra approach is therefore not algebraic but geometric. Within this framework, Pauli operators acquire a direct interpretation as oriented subspaces represented by blades. Commutation relations become incidence relations between these subspaces, while Clifford operators appear naturally as symmetry transformations acting on the resulting discrete geometry.

From this perspective, the value of the Geometric Algebra formulation lies in its ability to interpret familiar quantum structures through geometric notions such as directions, planes, orientations, and rotations. The geometric interpretation of Clifford+T universality developed in the previous subsection provides an example of this viewpoint: universality can be understood as a progressive densification of accessible rotation directions rather than solely as an algebraic property of matrix generators.

Consequently, the present work should be viewed primarily as a geometric reformulation of standard quantum operator theory rather than as an alternative \mbox{computational model.}

\section{Rotor Decomposition Algorithm} \label{DecAlgo}

Theorem \ref{clifford_grp} proves the existence of Pauli rotor decompositions but does not provide a constructive procedure.

\begin{Definition}
    Let $A \in \mathcal{G}_n$, we write $A$ in the canonical basis
    $$A = \sum_{J \subset \{1,\dots,2n\}}a_J e_J \quad a_J \in \mathbb{C}$$
    {The} support of $A$ is defined as the set 
    $$\text{Supp}(A) = \{ e_J | a_J \neq 0 \}$$
    {We} also define the support size of $A$ as
    $$\sigma(A) = | \text{Supp}(A)|$$
\end{Definition}

\vspace{-6pt}

\begin{algorithm}[H]
\algorithmcaption{{Greedy} 
 Pauli Rotor Decomposition}\label{greedy_Pauli_decomp}
\begin{algorithmic}[1]

\Require A Clifford operator $\lambda \in \mathcal C_n$
\Ensure A decomposition
\[
\lambda =
\left(\prod_{j=1}^s \rho_{b_j}\right)b
\]
with $b_j \in \mathcal P_n^{-}$ and $b \in \mathcal P_n$

\State Initialize $D \leftarrow \emptyset$

\While{$\sigma(\lambda) \neq 1$}

    \State Choose
    \[
    b \in \mathcal P_n^{-} \backslash (D \cup-D)
    \]
    minimizing
    \[
    \sigma\!\left(
    \rho_{b}^\dagger\lambda 
    \right)
    \]

    \State Append $b$ to $D$

    \State Update
    \[
    \lambda
    \leftarrow
    \rho_{b}^\dagger\lambda
    \]

\EndWhile
\State Append $\lambda$ to $D$
\State Return $D = [ b_1,\dots,b_s,b]$

\end{algorithmic}
\end{algorithm}

In the implementation used for the experiments, a multivector is stored by its coefficients in the canonical blade basis, requiring \(O(4^n)\) memory. At each iteration, the greedy step scans the possible choices of \(b\in\mathcal P_n^-\). For each candidate, the product \(\rho_b^\dagger\lambda\) can be computed by applying the blade multiplication induced by \(b\), which permutes blade coefficients up to signs and phases, followed by support counting. This gives a naive cost of \(O(4^n)\) per candidate. Since the number of Pauli blades grows as \(O(4^n)\), one greedy iteration has naive cost \(O(16^n)\), and the full algorithm has cost \(O(s16^n)\), where \(s\) is the number of iterations. Thus, the present implementation is intended as an exploratory structural tool rather than an optimized Clifford synthesis procedure.

\subsection{Empirical Behavior of the Algorithm}
For the empirical evaluation of the decomposition algorithm, we generated random Clifford operators from random products of Pauli rotors. For each number of qubits ($n \in \{1,\dots,4\}$), and for each initial rotor length ($1 \leq \ell \leq L$), we sampled ($N$) random Clifford operators of the form

$$\lambda = \prod_{j=1}^{\ell} \rho_{b_j}$$
where each ($b_j \in \mathcal P_n^{-}$) was chosen uniformly at random.

Each generated Clifford operator was then processed by the greedy decomposition algorithm introduced in the previous section. For every sample, we recorded:

\begin{itemize}
\item whether the algorithm terminated successfully,
\item the average and maximum lengths of the decompositions produced, for each initial length $1 \leq l \leq L$,
\item the evolution of $\sigma(\lambda)$ with iterations of the algorithm
\end{itemize}

The experiments therefore consisted of a total of $4 \times L \times N$ independent decomposition tests. The obtained data was used to study the empirical convergence of the algorithm and the growth behavior of decomposition lengths as functions of both the initial rotor length and the number of qubits. We took $L=20$ and $N=100$.

In order to better understand the behavior of the greedy reduction procedure, we tracked the evolution of the Pauli support size throughout the decomposition process for several representative Clifford operators.

All tested decompositions successfully reconstructed the original Clifford operators. Figure \ref{fig:avg_decomp} summarizes the average behavior of the decomposition lengths as functions of both the initial rotor length and the number of qubits.

\begin{figure}[H]
\includegraphics[width=0.9\linewidth]{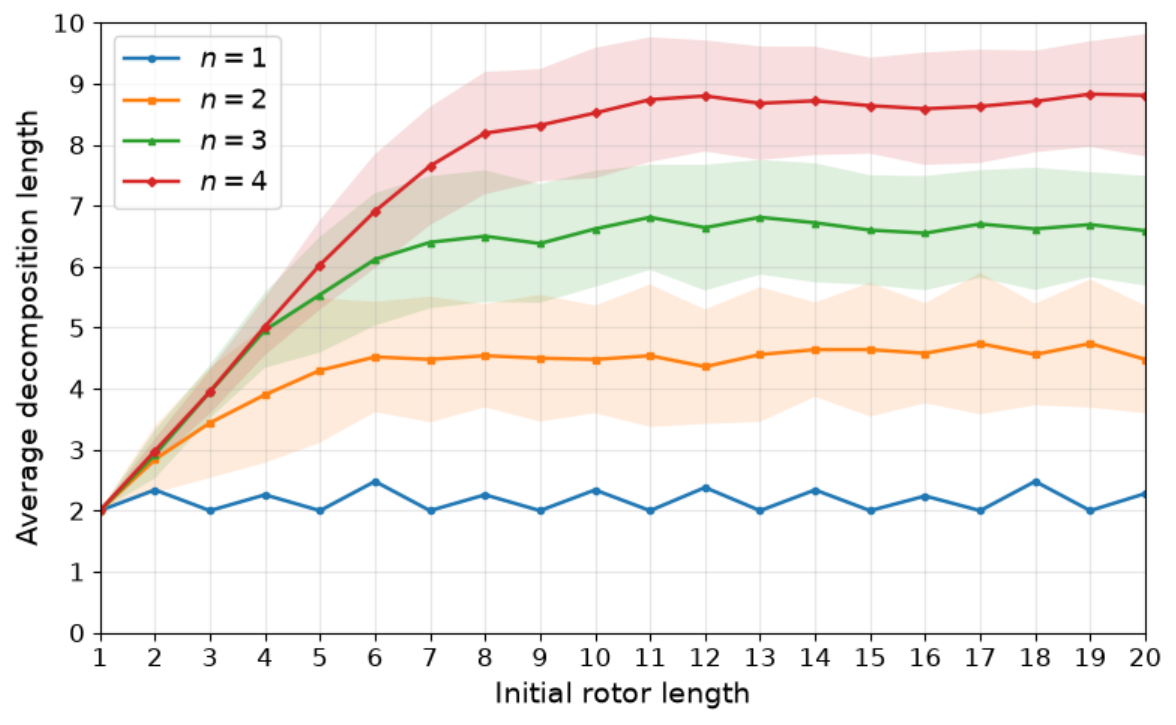}
\fcaption{Average
 decomposition length as a function of the initial rotor length.}
\label{fig:avg_decomp}
\end{figure}
Figure~\ref{fig:avg_decomp} shows that the average decomposition length grows much more slowly than the initial rotor length and appears to remain bounded by a quantity approximately linear in the number of qubits. Moreover, the largest decomposition length observed for each value of $n$ was equal to $2n+2$.

The shaded regions represent one empirical standard deviation around the mean and indicate that the decomposition lengths remain strongly concentrated around their average values. The case $n=1$ was omitted from this statistical representation since the very small size of $\mathcal{P}_1^{-}$ restricts the number of possible decomposition lengths, leading to noticeable finite-size oscillations that are not representative of the behavior observed for larger systems.

Figure~\ref{sigma_evolution} illustrates the evolution of the support size $\sigma(\lambda)$ as a function of the iteration number for two different randomly generated Clifford operators of different initial lengths. Although the support size is not strictly monotone during the reduction process, the overall behavior exhibits a rapid decay toward low-support configurations. In all tested examples, the algorithm terminated after a relatively small number of iterations compared to the initial support size.

\vspace{-3pt}
\begin{figure}[htbp]
    
    \includegraphics[width=0.9\linewidth]{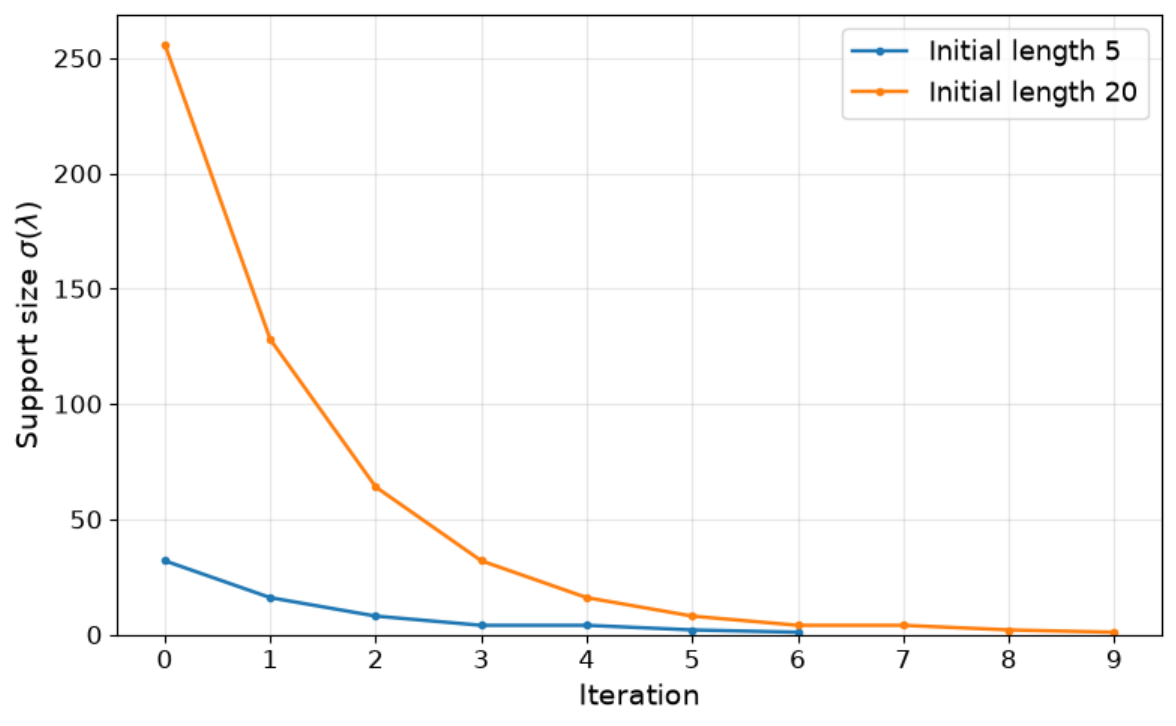}

    \fcaption{{Evolution} of $\sigma(\lambda)$ across iterations of the Algorithm~\ref{greedy_Pauli_decomp} for two random Clifford operators, $n=4$.}
    \label{sigma_evolution}

\end{figure}

\subsection{Structural Observations and Conjectures}
The empirical behavior of the decomposition algorithm suggests the existence of additional structural properties governing Clifford rotor decompositions.
\begin{conjecture}\label{conjecture_terminaison}
   For every Clifford operator $\lambda \in \mathcal C_n$, the greedy rotor decomposition algorithm (Algorithm~\ref{greedy_Pauli_decomp}) terminates after a finite number of steps. Moreover, the number of iterations required grows logarithmically with the Pauli support size. 
\end{conjecture}
One possible explanation for the observed behavior of the algorithm is the existence of algebraic redundancies among the Pauli blades appearing in a rotor decomposition. For example, if 
\[
\lambda=\rho_b\rho_c\rho_{bc},
\]
with $b \neq \pm c \in \mathcal{P}_n^-$ and $bc \in \mathcal{P}_n^{-}$, then $b$ and $c$ anticommute (otherwise \mbox{$(bc)^2 = b^2c^2 = 1$}) and one obtains
\[
\rho_b\rho_c=\rho_{bc}\rho_b
\]
so that
\[
\lambda=\rho_{bc}\rho_{bc}\rho_b=\rho_{-c}bc
\]

Thus, three rotors generated by dependent Pauli blades may be replaced by a decomposition involving only two rotors.

This observation suggests that minimal rotor decompositions may admit descriptions in terms of collections of Pauli blades free of such algebraic redundancies.

The following proposition shows that Conjecture~\ref{conjecture_terminaison} holds under this additional assumption.

\begin{Proposition}
Assume that a Clifford operator $\lambda$ admits a decomposition

\[
\lambda=\left(\prod_{j=1}^{r}\rho_{b_j}\right)b
\]
where $b\in\mathcal P_n$ and the blades
$b_1,\dots,b_r\in\mathcal P_n^-$ satisfy: \(
\nexists(\varepsilon_1,\dots,\varepsilon_r)
\in \{0,1\}^r\setminus\{0_r\}
\) such that \(
b_1^{\varepsilon_1}\cdots b_r^{\varepsilon_r}=\pm1
\). Then Algorithm~\ref{greedy_Pauli_decomp} terminates after at most $r$ iterations. Moreover, the number of iterations grows logarithmically with the support size.
\end{Proposition}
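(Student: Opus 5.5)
The plan is to expand the product and read off the support directly. Since $\rho_{b_j}=(1+b_j)/\sqrt2$, we have
\[
\lambda=\frac{1}{\sqrt2^{\,r}}\sum_{\varepsilon\in\{0,1\}^r} b_1^{\varepsilon_1}\cdots b_r^{\varepsilon_r}\,b .
\]
Each term $b_1^{\varepsilon_1}\cdots b_r^{\varepsilon_r}b$ is a Pauli element, so it equals a phase times a single canonical blade $e_J$. The independence hypothesis says that no nontrivial product $b_1^{\varepsilon_1}\cdots b_r^{\varepsilon_r}$ equals $\pm1$. I want to upgrade this to the statement that no nontrivial product is a scalar multiple of $1$ at all. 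The only other candidate is $\pm i$. I will handle it by noting that if two distinct words give the same blade up to phase, then their quotient is a nontrivial word up to reordering signs. I expect that the intended reading of the hypothesis is ``equal to a scalar''. If that is not enough, I will strengthen the hypothesis to ``$\ne \pm1,\pm i$'' and flag this.

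Granting this, the $2^r$ words are pairwise distinct blades up to phase, so no cancellation can occur. Hence $\sigma(\lambda)=2^r$, which gives the logarithmic statement: $r=\log_2\sigma(\lambda)$.

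The main step is to show that one greedy iteration lowers $r$ by one. For any $c\in\mathcal P_n^-$ we have $\rho_c^\dagger=(1-c)/\sqrt2$. I will argue two things.

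First, choosing $c=b_1$ gives $\rho_{b_1}^\dagger\lambda=(\prod_{j\ge2}\rho_{b_j})b$. This still satisfies the independence hypothesis, with $r-1$ blades, so its support is $2^{r-1}$.

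Second, no choice of $c$ can give a smaller support. The algebra is $\sigma(\rho_c^\dagger\lambda)\ge\sigma(\lambda)-|\{\text{cancellations}\}|$. Left multiplication by $c$ permutes blades, so $(1-c)\lambda$ has support at least $\sigma(\lambda)-|S\cap cS|$ in the worst case, where $S=\operatorname{Supp}(\lambda)$. More precisely, the support of $(1-c)\lambda$ contains the symmetric difference of $S$ and $cS$, and at best it cancels on $S\cap cS$. This gives the lower bound $\sigma\ge|S\triangle cS|$, and possibly $0$ on the intersection. So the support size is at least $2^r-\dots$. I will instead use the group structure. $S$ is, up to phase, the coset $\langle b_1,\dots,b_r\rangle b$ of an elementary abelian-like subgroup $G$ of $\mathcal P_n$ modulo phases, of order $2^r$.
\begin{itemize}
\item If $c\in G$ modulo phase, then $cS=S$. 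The sum then collapses to a coset of an index-2 subgroup, giving $2^{r-1}$, or it doubles.
\item If $c\notin G$, then $S\cap cS=\emptyset$ and the support is $2^{r+1}$.
\end{itemize}
So the minimum is $2^{r-1}$. It is attained by some $c$ whose product decomposition again has a representation $(\prod_{j=1}^{r-1}\rho_{b'_j})b'$ with independent blades. To get this, I will rewrite $\lambda$ using the commutation rule $\rho_b\rho_a=\rho_{\rho_b a\rho_b^\dagger}\rho_b$ from the proof of Proposition~\ref{prop_decomp}, moving $\rho_c$ to the front. This rule only multiplies blades by elements of $G$, so independence is preserved.

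Induction on $r$ then finishes the argument. After $r$ steps the support is $1$, so the loop stops and the remainder is a Pauli element. The exclusion $b\notin D\cup-D$ is harmless, because the generators chosen at each stage lie in a strictly shrinking group $G$ modulo phases, and so they never repeat.

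The main obstacle is the case analysis of $\sigma((1-c)\lambda)$ when $c\in G$. The phases of coefficients determine whether terms $x$ and $cx$ cancel or add, since $c$ may not commute with the $b_j$. I need to show that a cancelling $c$ exists and that the cancellation is exactly half, never more. I would first treat the commuting case explicitly and then reduce the general case to it via the reordering rule.
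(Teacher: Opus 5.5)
\textbf{The gap is in the inductive step.} You claim that whichever minimizer $c$ the greedy step picks can be moved to the front via $\rho_b\rho_a=\rho_{\rho_ba\rho_b^\dagger}\rho_b$, leaving an independent decomposition of length $r-1$. That rule only reorders rotors already present, turning $a$ into $ba$ when they anticommute. It never produces a blade that is a product of several mutually commuting $b_j$, yet such products can tie for the minimum.

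Take $n\ge3$, $b_j=e_je_{n+j}$ for $j=1,2,3$, which are pairwise commuting and independent. Let $\lambda=\rho_{b_1}\rho_{b_2}\rho_{b_3}$, which is $S_1S_2S_3$ up to phase, so $\sigma(\lambda)=8$. The blade $c=b_1b_2b_3$ lies in $\mathcal P_n^-$, and
\[
\mu:=\rho_c^\dagger\lambda=\tfrac12\left(1+b_1b_2+b_1b_3+b_2b_3\right),
\]
so $c$ ties with $b_1$ at support $4$.

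However, $\mu\neq e^{i\phi}\rho_x\rho_yP$ for any $x,y\in\mathcal P_n^-$, $P\in\mathcal P_n$. The ratio of any two terms of $\mu$ squares to $+1$. In $\tfrac12(P+xP+yP+xyP)$, the ratio of the $xP$ term to the $P$ term is $x$, which squares to $-1$.

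Worse, no rotor lowers the support of $\mu$. A candidate whose blade lies outside $\{1,b_1b_2,b_1b_3,b_2b_3\}$ up to phase doubles the support. The candidates inside are $\pm i$ and $\pm i\,b_jb_k$, and none of them cancels a term. Since each step at most halves the support, this run needs at least $4>r=3$ iterations. So with tie-breaking unspecified the statement as written is false, and no argument covering an arbitrary minimizer can succeed.

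\textbf{Comparison with the paper.} The paper's proof has the same hole, only silently. It exhibits $c=b_1$ and says ``repeat'', which assumes the greedy choice is always one of the current $b_j$. Under that convention, or read as ``some run of the algorithm'', the rest of your plan works and fills two points the paper skips.

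First, your insistence on independence modulo \emph{all} scalars is warranted. The literal hypothesis allows products equal to $\pm i$, for example $b_1=i$, or $ie_1,ie_2,ie_3,e_1e_2e_3$ whose product is $i$, and then $\sigma(\lambda)<2^r$.

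Second, your case analysis shows $b_1$ really is a minimizer, once the slip ``or it doubles'' is corrected. For $c\in G$ one has $cS=S$. Because $c^2=-1$, at most one coefficient in each pair $\{x,cx\}$ of $(1-c)\lambda$ can vanish, so $\sigma(\rho_c^\dagger\lambda)\ge 2^{r-1}$. For $c\notin G$ the support is $2^{r+1}$.
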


\begin{proof}
Expanding the product gives us
\[
\lambda=
\frac1{\sqrt2^r}
\sum_{\varepsilon\in \{0,1\}^r}
b_1^{\varepsilon_1}\cdots b_r^{\varepsilon_r}b
\]

By assumption, distinct vectors
$\varepsilon\in\{0,1\}^r$
produce distinct Pauli blades. Therefore $\sigma(\lambda)=2^r$. Let $c=b_1$, since $\rho_{c}^\dagger \rho_{b_1}=1$, one obtains

\[
\rho_c^\dagger\lambda
=
\left(\prod_{j=2}^{r}\rho_{b_j}\right)b
\]

Hence

\[
\sigma(\rho_c^\dagger\lambda)
=
2^{r-1}
<
2^r
=\sigma(\lambda)
\]

Repeating this argument removes one rotor at each iteration, so Algorithm~\ref{greedy_Pauli_decomp} terminates after at most $r$ steps, with $r=\log_2(\sigma(\lambda))$.
\end{proof}

Therefore, proving the existence of such irredundant rotor decompositions for arbitrary Clifford operators would imply Conjecture~\ref{conjecture_terminaison}.

Interestingly, the additional independence assumption introduced above is also consistent with the second empirical observation concerning decomposition lengths. Indeed, if a Clifford operator admits a decomposition $\lambda=\left(\prod_{j=1}^r \rho_{b_j}\right)b$ with multiplicatively independent blades $b_1,\dots,b_r\in\mathcal{P}_n^-$, then the cardinality of such a family is \mbox{necessarily bounded.}

Since every Pauli blade of $\mathcal P_n$ can be written as a product of the $2n+1$ elementary generators, no multiplicatively independent family of Pauli blades can contain more than $2n+1$ elements. Consequently, one necessarily has $r \leq 2n+1$.

Thus, the same structural assumption that provides a partial justification for Conjecture~\ref{conjecture_terminaison} also naturally explains the linear growth observed experimentally and motivates the following conjecture.

\begin{conjecture}
    Let $\lambda \in \mathcal{C}_n$, then $\exists r \leq 2n+1, b_1,\dots,b_r \in \mathcal{P}_n^-|b_j \neq \pm b_k, b\in\mathcal{P}_n$ such that 
    $$\lambda =\left( \prod_{j=1}^{r}\rho_{b_j}\right)b$$
\end{conjecture}

Although the proposed decomposition procedure relies on a simple greedy strategy and is not intended as an optimized circuit synthesis algorithm, the numerical results suggest that Clifford operators admit compact descriptions in terms of Pauli rotors. The observed linear growth in decomposition length should therefore be interpreted as a structural property of the Geometric Algebra representation rather than as a statement about circuit complexity or implementation cost.

Indeed, unlike elementary gates such as \(H\), \(S\), and \(CNOT\), general Pauli rotors may themselves require additional compilation into hardware-native gate sets. Consequently, the decomposition lengths reported here are not directly comparable to standard gate-count metrics. Instead, they provide evidence that Clifford operators may be geometrically interpreted through a relatively small number of primitive rotations.
\section{Discussion and Conclusions}

In this work, we developed an intrinsic formulation of the Pauli and Clifford groups within the framework of complex Geometric Algebra inspired by the research proposed in \cite{Hrdina et al. (2022)}. The Pauli group was characterized as the group of canonical blades, allowing Pauli operators to be interpreted directly as geometric primitives such as oriented directions (vectors), planes (bivectors), and higher-dimensional objects. Within this framework, commutation relations acquire a natural geometric meaning: commuting Pauli operators correspond to geometrically compatible structures, in the sense that one structure is invariant to rotations in the other.

Building upon this interpretation, we showed that Clifford operators are products of $\pi/4$ Pauli rotors. Rather than defining the Clifford group solely through its action on the Pauli group under conjugation, Theorem \ref{clifford_grp} provides a direct constructive description of Clifford transformations in terms of elementary geometric rotations generated by Pauli blades. From this perspective, Clifford dynamics remains internal to the geometry of the Pauli group itself, and Clifford operators may therefore be viewed as discrete geometric symmetry transformations acting on oriented subspaces. Furthermore, Proposition \ref{prop_decomp} shows that these decompositions may always be reduced to products of distinct Pauli rotors up to a final Pauli operator, providing a more structured representation of Clifford operators and yielding a first nontrivial upper bound of $2^{2n}$ on decomposition length.

We further investigated the structure of these decompositions through a greedy Pauli rotor decomposition algorithm (Algorithm \ref{greedy_Pauli_decomp}). Although the existence Theorem alone only yields a combinatorial upper bound on decomposition length, the empirical results suggest that Clifford operators admit significantly more compact rotor representations than expected. In particular, the experimentally observed decomposition lengths appear to be bounded above by $2n+2$. These observations suggest the presence of additional hidden geometric structure governing Clifford operators and their rotor representations.

Finally, the Geometric Algebra framework also provides a geometric interpretation of the universality of the Clifford+\(T\) gate set. While Clifford rotors preserve the discrete geometry generated by Pauli blades, the introduction of \(\pi/8\)-rotors continuously deforms Pauli directions into multivector directions lying outside the original discrete structure. In this way, universality emerges geometrically from the progressive densification of accessible rotation directions inside the unitary group.

Several questions remain open in this research. One of the most natural problems concerns the uniqueness of Pauli rotor decompositions. Although decompositions are not unique in general, the empirical behavior observed throughout this work suggests the possible existence of a deeper hidden structure. More precisely, it appears plausible that if decompositions are restricted to the minimal length, then any two minimal decompositions of the same Clifford operator should generate the same subgroup of the Pauli group. From this perspective, a Clifford operator would not simply admit several equivalent decompositions but could instead be intrinsically associated with a unique underlying Pauli subgroup encoding its geometric structure. Understanding the algebraic and geometric meaning of these subgroups remains an open problem.

More broadly, we hope that this work contributes to the development of Geometric Algebra (GA) methods in quantum computing. One of the central motivations behind this approach is that GA provides direct geometric interpretations for objects that are traditionally introduced through highly combinatorial matrix formalisms. As quantum systems grow in complexity, purely matrix-based descriptions often become increasingly difficult to interpret structurally \cite{Babbush et al. (2025)}. By contrast, GA naturally encodes directions, planes, rotations, and higher-dimensional incidence structures within a unified framework.

For example, within the GA framework developed here, the stabilizer structure used in quantum error correction could acquire a direct geometric interpretation. Commuting stabilizers correspond to mutually compatible geometric primitives whose induced rotational structures preserve one another, and this viewpoint naturally connects to syndrome decoding in quantum error correction. It is well known that optimal decoding rapidly becomes computationally intractable as the number of qubits increases \cite{Bradshaw&al2025,Iyer and Poulin (2013),Krastanov and Jiang (2017)}. Within the present framework, information about a syndrome could acquire an alternative interpretation in terms of geometric incompatibility relations between blades. Since commutation and anticommutation correspond to invariance properties between oriented subspaces, error syndromes may be viewed as signatures of geometric disturbances inside the Pauli structure. In this way, GA could offer additional intuition regarding the organization of syndrome spaces and the propagation of errors under Clifford transformations.

It is therefore conceivable that geometric viewpoints may provide alternative ways of understanding quantum circuits, error propagation, and operator structures beyond the standard computational perspective.

\nonumsection{Author Contributions}
Conceptualization, Y.A. and Z.T.; methodology, Y.A.; software, Y.A.; validation, Y.A. and Z.T.; formal analysis, Y.A.; investigation, Y.A. and Z.T.; resources, Y.A. and Z.T.; data curation, Y.A. and Z.T.; writing---original draft preparation, Y.A. and Z.T.; writing---review and editing, Y.A. and Z.T.; visualization, Y.A. and Z.T.; supervision, Y.A. and Z.T.; project administration, Z.T. All authors have read and agreed to the published version of the manuscript.

\nonumsection{Funding}
This research received no external funding.

\nonumsection{Data Availability Statement}
The code and data supporting the findings of this study are publicly available in the GitHub repository: \href{https://github.com/AmraouiY/Geometric-Algebra-Clifford-Gate-Rotor-Decomposition}{Geometric-Algebra-Clifford-Gate-Rotor-Decomposition}, accessed on 5 July 2026.

\nonumsection{Acknowledgments}
The authors are very grateful to the academic team of the student research track \textit{Parcours Recherche} at CentraleSupélec for enabling Y.A. to undertake this research project. The authors also thank Grégoire Veyrac for fruitful discussions on Geometric Algebra during the early stages of the project. Some of the ideas presented here were inspired by the previous project \textit{Pauli Rotors and Unitary Decomposition}, carried out at CentraleSupélec in 2020 by Antoine Cornillot, Joao Henrique Alves, and Jocelyn Terle, to whom the authors are sincerely grateful.

\nonumsection{Conflicts of Interest}
The authors declare no conflicts of interest.

\nonumsection{Abbreviations}
\begin{tabular}{@{}ll}
GA & Geometric Algebra\\
PDCS & Pauli Decomposition over Commuting Subsets
\end{tabular}

\section*{Appendix: Proofs}\label{app1}
\subsection{Proof of Proposition \ref{prop_commutation}} \label{prop_commutation_proof}
\begin{proof}
    For a vector $e_j$, we have 
    \begin{align}
        e_je_K = e_je_{k_1}\cdots e_{k_{|K|}}
    \end{align}
{Since} the vectors of the orthogonal basis anticommute, we obtain
\begin{align}
        e_je_K = (-1)^{|K|-1_{(j \in K)}}e_{k_1}\cdots e_{k_{|K|}}e_j = (-1)^{|K|-1_{(j \in K)}}e_Ke_j
    \end{align}
    {The} factor $(-1)^{|K|-1_{(j \in K)}}$ comes from the fact that $e_j$ will anticommute with every vector in $e_K$ except itself if $j\in K$. We therefore have
    \begin{align}
        e_Je_K = (-1)^ae_Ke_J
    \end{align}
    where
    \begin{align}
        a = \sum_{j\in J}|K|-1_{(j\in K)} = |J||K| -|K\cap J|
    \end{align}
    {So} we get the general formula for the commutation and anticommutation of elementary blades
    \begin{equation}
        e_Je_K = (-1)^{|K||J|-|K\cap J|}e_Ke_J
    \end{equation}
    which completes the proof.
\end{proof}

\subsection{Proof of Lemma \ref{rotor_b_effect}} \label{rotor_b_effect_proof}

\begin{proof}
Since $b^2=-1$, and using the series definition of the exponential, we get
\begin{align}\nonumber
    \rho_b &= \exp\left(\frac{\pi}{4}b\right)\\
    &=\cos\left(\frac{\pi}{4}\right) +\sin\left(\frac{\pi}{4}\right)b \\\nonumber
    &=\frac{1+b}{\sqrt2}
\end{align}
{So}
    \begin{align}\nonumber
        \rho_b\pi\rho_b^\dagger =& \left(\frac{1+b}{\sqrt{2}} \right)\pi\left( \frac{1-b}{\sqrt2}\right)\\
        =&\frac{\pi + b\pi -\pi b -b\pi b}{2}
    \end{align} 
    {Since} $b,\pi \in \mathcal{P}_n$, they either commute or anticommute, the first case yields $\frac{1}{2}(\pi + b\pi -b\pi -b^2\pi ) = \frac{1 }{2}(\pi +\pi) =\pi$, and the second case yields $\frac{1}{2}(\pi + b\pi +b\pi  +b^2\pi) = \frac{1}{2}(b\pi +b\pi) = b\pi$.
\end{proof}

\subsection{Proof of Lemma \ref{rotor_e1}} \label{rotor_e1_proof}

\begin{proof}
    Write $b=i^\varepsilon e_J$. 
    If $|J|$ is even (not empty, because otherwise $e_J=\pm1$), then we consider $b_1=i^{\varepsilon}e_{j_2} \cdots e_{j_{|J|}} =e_{j_1}b \in \mathcal{P}_n^-$, by Proposition \ref{prop_commutation} $b_1$ and $b$ anticommute, so \linebreak  $\rho_{b_1}b\rho_{b_1}^\dagger=b_1b=e_{j_1}b^2=e_{j_1}$. If $j_1\neq 1$, we then consider $b_2 = e_1e_{j_1}$ that anticommutes with $e_{j_1}$ and we thus have $\rho_{b_2}\rho_{b_1}b(\rho_{b_2}\rho_{b_1})^\dagger= b_2e_{j_1} = e_1$. We take $u = \rho_{b_2}\rho_{b_1}$, we have
    \begin{align}
        ubu^\dagger = e_1
    \end{align}
    
    If $|J|$ is odd, let $k \in \{1,\dots,2n\} \backslash J$, we consider $b_0 = ie_k$, by Proposition \ref{prop_commutation} $b_0$ and $b$ anticommute, so $\rho_{b_0}b\rho_{b_0}^\dagger = b_0b = \pm i^{\varepsilon+1}e_{J'}$ with $J' =J\cup \{k\}$, of even cardinal, we then go back to the first case.
\end{proof}
\subsection{Proof of Corollary \ref{ueJu=e1_ueKu=en+1}} \label{ueJu=e1_ueKu=en+1_proof}
\begin{proof}
    $b \neq \pm 1$ since otherwise it would commute with $c$. From the above Lemma, we can suppose that $b=e_1$ since the conjugation by a Clifford operator preserves product and commutation/anticommutation properties. Write $c = i^\varepsilon e_K$.

    The anticommutation of $e_1$ and $e_K$ is given by the following
    \begin{align}
        |K| = 1_{\{1 \notin K\}} \mod 2
    \end{align}
    
    If $|K|$ is even, then $1 \in K$.
    If $n+1 \in K$, then we take $c_1 = e_{n+1}c$, it anticommutes with $c$ since $c$ anticommutes with $e_{n+1}$ because $n+1 \in K$ and $|K|$ is even, and it commutes with $e_1$ since both $e_{n+1}$ and $c$ anticommute with it, so we deduce that $\rho_{c_1}c\rho_{c_1}^\dagger=c_1c =e_{n+1}$ and $\rho_{c_1}e_1\rho_{c_1}^\dagger=e_1$.
    Otherwise, take $c_1=ie_{n+1}e_1c$ and $c_2= ie_1$. Both commute with $e_1$, so both corresponding rotors stabilize $e_1$, moreover, $c_1$ anticommutes with $c$ since it commutes with $e_{n+1}$, so $\rho_{c_1} c\rho_{c_1}^\dagger =c_1c = ie_{n+1}e_1$, and that blade anticommutes with $c_2$, so $\rho_{c_2}\rho_{c_1}c(\rho_{c_1}\rho_{c_2})^\dagger = ie_1ie_{n+1}e_1 =e_{n+1}$.

    If $|K|$ is odd, then $1\notin K$, then we consider $c_0=ie_1$, we have $\rho_{c_0} c\rho_{c_0}^\dagger = ie_1i^\varepsilon e_K=i^{\varepsilon+1}e_{K \cup \{1\}}$, moreover, $\rho_{c_{0}}$ stabilizes $e_1$, we hence go back to the first case.

    We conclude that $\exists u \in \mathcal{R}_n$ such that $ue_1u^\dagger = e_1$ and $ucu^\dagger =e_{n+1}$; thus, the proof \linebreak  is complete.
\end{proof}


\nonumsection{References}


\end{document}